\newcommand{\ie}{\emph{i.e.}}
\newcommand{\eg}{\emph{e.g.}}
\newcommand{\cf}{\emph{cf.}}
\newcommand{\Real}{\mathbb{R}}
\newcommand{\Com}{\mathbb{C}}
\newcommand{\Int}{\mathbb{Z}}
\newcommand{\Sphere}{\mathbb{S}}
\newcommand{\supp}{\mathop{\mathrm{supp}}\nolimits}
\newcommand{\Dom}{\mathop{\mathsf{dom}}\nolimits}
\newcommand{\dist}{\mathop{\mathrm{dist}}\nolimits}
\newcommand{\rot}{\mathop{\mathrm{rot}}\nolimits}
\newcommand{\divergence}{\mathop{\mathrm{div}}\nolimits}
\newcommand{\eps}{\varepsilon}
\newcommand{\sii}{L^2}
\newcommand{\der}{\mathrm{d}}
\newtheorem{Theorem}{Theorem}
\newtheorem{Lemma}{Lemma}
\theoremstyle{definition}
\newtheorem{Remark}{Remark}
\definecolor{DarkGreen}{rgb}{0,0.5,0.1} 
\newcommand\soutD{\bgroup\markoverwith
{\textcolor{DarkGreen}{\rule[.5ex]{2pt}{1pt}}}\ULon}
\newcommand\soutP{\bgroup\markoverwith
{\textcolor{blue}{\rule[.5ex]{2pt}{1pt}}}\ULon}
\newcommand{\Hm}[1]{\leavevmode{\marginpar{\tiny%
$\hbox to 0mm{\hspace*{-0.5mm}$\leftarrow$\hss}%
\vcenter{\vrule depth 0.1mm height 0.1mm width \the\marginparwidth}%
\hbox to
0mm{\hss$\rightarrow$\hspace*{-0.5mm}}$\\\relax\raggedright #1}}}
\begin{document}
%
\title{\textbf{\Large
Complex magnetic fields: 
An improved Hardy-Laptev-Weidl inequality 
and quasi-self-adjointness 
}}
\author{David Krej\v{c}i\v{r}{\'\i}k}
\date{\small 
\emph{
\begin{quote}
\begin{center}
Department of Mathematics, Faculty of Nuclear Sciences and 
Physical Engineering, Czech Technical University in Prague, 
Trojanova 13, 12000 Prague 2, Czech Republic;
david.krejcirik@fjfi.cvut.cz.
\end{center}
\end{quote}
}
\smallskip
15 February 2018}
\maketitle
 
%
\begin{abstract}
\noindent
We show that allowing magnetic fields to be complex-valued
leads to an improvement in the magnetic Hardy-type inequality
due to Laptev and Weidl. 
The proof is based on the study of  
momenta on the circle with complex magnetic fields,
which is of independent interest in the context of  
$\mathcal{PT}$-symmetric and quasi-Hermitian quantum mechanics.
We study basis properties of the non-self-adjoint momenta
and derive closed formulae for the similarity transforms 
relating them to self-adjoint operators.
\end{abstract}
%

\section{Introduction} 
%
The magnetic field in quantum mechanics
is an appealing concept for several reasons.
First, it enters the theory indirectly, 
through its vector potential in the Schr\"odinger equation,
leading thus to purely quantum phenomena 
such as the Aharonov-Bohm effect.
Second, the magnetic Hamiltonian 
is not invariant under complex conjugation,
which results in the possibility of degeneracy of the ground state
and other abnormal properties with respect to the theory of 
elliptic partial differential equations with real coefficients.
Finally, the aesthetic dimension of the theory is also to be emphasised, 
for it can be described by a mathematically elegant, 
geometric framework of differential forms.
Instead of choosing a few works from the huge bibliography
devoted to the magnetic field in mathematical physics
over the last hundred years, we refer to the recent 
Raymond's monograph \cite{Raymond-book},
initially entitled ``Little magnetic book'',
with many references. 

In this paper we point out some new phenomena related 
to magnetic fields which are allowed to be complex-valued.
It is remarkable that there is an experimental evidence
for imaginary magnetic fields in quantum statistical physics 
\cite{Peng-etal_2015, Ananikian-Kenna_2015}.
However, our primary motivations come 
from more theoretical aspects in quantum mechanics that we explain now.

\subsection{Improved magnetic Hardy inequalities}
Hardy inequalities are sometimes interpreted 
as a sort of uncertainty principle in quantum mechanics.
More specifically, they are functional inequalities 
quantifying positivity of elliptic operators.
For instance, the spectrum of the free Hamiltonian in~$\Real^d$,
\ie\ the self-adjoint realisation of the Laplacian in $\sii(\Real^d)$,
is stable against small short-range perturbations if, and only if,
a Hardy inequality holds, that is, $d \geq 3$.   
It is a noteworthy observation of Laptev and Weidl~\cite{Laptev-Weidl_1999} 
that Hardy inequalities hold even in the two-dimensional situation
provided that magnetic fields are added.

Our first observation is that complex ``magnetic fields'' 
lead to improved variants of the Hardy-type inequality 
due to Laptev and Weidl~\cite{Laptev-Weidl_1999} in the real case. 

\begin{Theorem}\label{Thm.Hardy}
Let a smooth complex field $B:\Real^2\to\Com$ 
be compactly supported and
satisfy the condition
\begin{equation}\label{flux}
  \frac{1}{2\pi}\int_{\Real^2} \Re B(x) \, \der x \not\in \Int
  \qquad \mbox{or} \qquad
  \frac{1}{2\pi}\int_{\Real^2} \Im B(x) \, \der x \not=0
  \,.
\end{equation}
Then there exists a positive constant~$c$ depending on~$B$ 
such that the following inequality holds
\begin{equation}\label{Hardy}
  \forall \psi \in C_0^\infty(\Real^2) \,, \qquad
  \int_{\Real^2} |\nabla_{\!A}\psi(x)|^2 \, \der x
  \geq c \int_{\Real^2} \frac{|\psi(x)|^2}{1+|x|^2} \, \der x
  \,,
\end{equation}
where $\nabla_{\!A} := \nabla - iA$ with
\begin{equation}\label{gauge}
  A(x) := (-x_2,x_1) \int_0^1 B(tx) \, t \, \der t
  \,. 
\end{equation}
\end{Theorem}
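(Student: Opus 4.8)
The plan is to follow the Laptev–Weidl strategy of reducing the two-dimensional magnetic form to a one-dimensional problem on circles via polar coordinates, and then to exploit the fact that a complex magnetic flux can never be "integer-valued" in the sense that makes the angular operator degenerate. Writing $x = (r\cos\theta, r\sin\theta)$, the gauge choice \eqref{gauge} is precisely the transversal (Poincaré) gauge, so the radial component of $A$ vanishes and $\int_{\mathbb{S}^1} A \cdot \der s$ over the circle of radius $r$ equals the magnetic flux $\Phi(r) := \frac{1}{2\pi}\int_{|x|<r} B(x)\,\der x$ through the disc of radius $r$; by compact support of $B$, $\Phi(r)$ stabilises to the total flux $\Phi_\infty := \frac{1}{2\pi}\int_{\mathbb{R}^2} B(x)\,\der x$ for large $r$. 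The Dirichlet form $\int_{\mathbb{R}^2}|\nabla_{\!A}\psi|^2$ then decomposes, after separating the radial derivative, into a sum over $r$ of the angular quadratic forms $\int_0^{2\pi} |(-i\partial_\theta - \alpha(r))u|^2 \, \der\theta$ for suitable $\alpha(r)$ related to $\Phi(r)$ — this is exactly the complex magnetic momentum on the circle that the paper studies.

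The key lower bound is then the one-dimensional inequality: for the operator $-i\partial_\theta - \alpha$ on $L^2(0,2\pi)$ with periodic boundary conditions, one has $\int_0^{2\pi}|(-i\partial_\theta-\alpha)u|^2\,\der\theta \geq \operatorname{dist}(\Re\alpha,\Int)^2 \int_0^{2\pi}|u|^2\,\der\theta$ in the self-adjoint case $\alpha\in\Real$, but when $\alpha$ has nonzero imaginary part one gets a \emph{strictly positive} bound even if $\Re\alpha\in\Int$. The cleanest way to see this is to expand $u$ in the Fourier basis $e^{in\theta}$, $n\in\Int$: then $\|(-i\partial_\theta-\alpha)u\|^2 = \sum_n |n-\alpha|^2 |c_n|^2 \geq \min_{n\in\Int}|n-\alpha|^2 \, \|u\|^2$, and $\min_{n\in\Int}|n-\alpha|^2 = \operatorname{dist}(\Re\alpha,\Int)^2 + (\Im\alpha)^2$, which is positive under hypothesis \eqref{flux}. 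Applying this fibrewise and integrating against $r\,\der r$, together with the standard Hardy-type trick of comparing $\int_0^\infty |f'|^2 r\,\der r + c_0\int_0^\infty |f|^2 r^{-1}\,\der r \geq c_1 \int_0^\infty \frac{|f|^2}{1+r^2} r\,\der r$ for the radial part (absorbing the positive angular floor into the role of the "effective coupling"), yields \eqref{Hardy} with a constant $c$ depending on $\Phi_\infty$ and on the radius beyond which $B$ vanishes.

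The main obstacle I anticipate is \emph{not} the fibrewise positivity — that is elementary — but rather handling the $r$-dependence of the flux $\Phi(r)$ on the support of $B$, where $\Phi(r)$ may well pass through integer values of its real part, so the angular floor $\operatorname{dist}(\Re\Phi(r),\Int)^2 + (\Im\Phi(r))^2$ could vanish for some small $r$ in the purely-real-looking regime. One must argue that this is harmless: either because the radial kinetic term alone controls $\int \frac{|\psi|^2}{1+r^2}$ on any fixed bounded region (a compactness/Poincaré argument on an annulus, using that $\psi\in C_0^\infty$ and the flux condition forces genuine positivity for large $r$), or by a more careful global argument. Concretely, I would split the integral into $|x|<R$ and $|x|>R$ where $R$ is chosen past the support of $B$; on $|x|>R$ the flux is the constant $\Phi_\infty$ satisfying \eqref{flux}, so the fibre argument gives a clean positive bound there, and on $|x|<R$ one invokes a local Hardy inequality for $\nabla_{\!A}$ with bounded smooth $A$ — such an operator has no zero modes in $C_0^\infty$ and a standard argument (e.g. via the ground-state representation or a contradiction-compactness argument) produces the required bound on the bounded set. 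Gluing the two estimates with a partition of unity and controlling the cross term completes the proof.
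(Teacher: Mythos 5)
Your overall strategy (polar reduction in the transverse gauge, a positive floor for the angular form on each circle, then an interior/exterior splitting) is the same as the paper's, but your key fibrewise bound is not correct as stated. After separation of variables the angular form at radius $r$ is $\int_{-\pi}^{\pi}|(-i\partial_\theta-a(r,\theta))u(\theta)|^2\,\der\theta$ with $a(r,\theta)=\int_0^r B(t\cos\theta,t\sin\theta)\,t\,\der t$, which is genuinely $\theta$-dependent — also for $r$ beyond the support of $B$, where only the mean $\langle a(r,\cdot)\rangle=\Phi(r)$ is constant. Your Fourier computation $\sum_n|n-\alpha|^2|c_n|^2\geq\big(\dist(\Re\alpha,\Int)^2+(\Im\alpha)^2\big)\|u\|^2$ is valid only for a \emph{constant} coefficient $\alpha$. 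Reducing to that case requires conjugating by the gauge factor $\exp\big(i\langle a\rangle\theta-i\int_{-\pi}^\theta a\big)$, and precisely because $\Im a(r,\cdot)$ is non-constant this multiplier is not unimodular: the similarity is non-unitary, and the floor degrades by the square of its condition number (this is the content of the paper's Lemma~\ref{Lem.lambda} and the remark following it). The qualitative conclusion survives — the lowest eigenvalue $\lambda_a(r)$ of ${P_{a(r,\cdot)}}^{*}P_{a(r,\cdot)}$ vanishes if and only if $\Re\langle a(r,\cdot)\rangle\in\Int$ and $\Im\langle a(r,\cdot)\rangle=0$, since $\lambda_a(r)=0$ forces a zero mode of $P_{a(r,\cdot)}$ and hence $0\in\sigma(P_{a(r,\cdot)})=\{m-\langle a(r,\cdot)\rangle\}_{m\in\Int}$ — but your explicit formula for the constant does not, and this non-unitarity is exactly where the complex-field novelty enters; it cannot be waved away.

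The interior estimate also has a gap. The test functions do not vanish on $\partial D_R$, so ``no zero modes in $C_0^\infty$'' is the wrong space: what you need is positivity of the Neumann-type quantity $\inf\int_{D_R}|\nabla_{\!A}\psi|^2/\int_{D_R}|\psi|^2$ over $\psi\in C^\infty(\overline{D_R})$. For complex $A$ this is not ``standard'': the diamagnetic inequality and the Perron--Frobenius positivity of the ground state are both unavailable, so one must show that $\nabla_{\!A}\psi=0$ on $D_R$ implies $B\psi=0$ there and then invoke a unique continuation argument to conclude $\psi$ is nowhere zero, whence $B=0$ on $D_R$ (the paper's Lemma~\ref{Lem.mu}). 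Granting that, your splitting does close — and in fact more directly than the paper's own route for the interior, which avoids the Neumann eigenvalue by controlling the region $|x|<R$ through the neglected radial kinetic term, cut-offs, and one-dimensional Hardy inequalities with logarithmic weights (Theorem~\ref{Thm.Hardy.log}). Note finally that once you have the exterior bound with weight $|x|^{-2}$ and the interior $L^2$ bound, you can simply add the two inequalities; no partition of unity and no cross terms are needed.
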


In the sentence preceding the theorem, 
we have used quotation marks because the left-hand side of~\eqref{Hardy}
is a symmetric quadratic form, 
associated with a self-adjoint operator in $\sii(\Real^2)$,
even if the vector potential~$A$ is complex-valued.
In other words, the differential expression $(\nabla_{\!A})^{*}\,\nabla_{\!A}$
is of course formally self-adjoint, 
while the magnetic Hamiltonian would rather correspond 
to $-\nabla_{\!A}\nabla_{\!A}$,
which is non-self-adjoint whenever~$A$ is not real-valued.
Hence, Theorem~\ref{Thm.Hardy} is potentially relevant 
in the context of conventional quantum mechanics,
where the ``imaginary part of the magnetic field'' typically
relates the problem to a self-adjoint Schr\"odinger operator
in a weighted $\sii$-space (\cf~Remark~\ref{Rem.weight} below).

For $\Im B = 0$, inequality~\eqref{Hardy} is the celebrated result of 
Laptev and Weidl \cite[Thm.~1]{Laptev-Weidl_1999}
and~\eqref{flux} reduces to the standard flux condition.
The novelty here is that, for complex magnetic fields, 
one can get a non-trivial Hardy inequality
even if the first condition of~\eqref{flux} is not satisfied. 
Indeed, it is enough that the magnetic field
satisfies the second condition of~\eqref{flux},
which is a kind of complex extension of the standard flux condition.

The relationship~\eqref{gauge} ensures that 
$B = \rot A := \partial_1 A_2 - \partial_2 A_1$
and that the transverse condition
\begin{equation}\label{transverse}
  x \cdot A(x) = 0
  \,. 
\end{equation}
holds for all $x \in \Real^2$.
In view of the gauge invariance for real-valued~$B$, 
the requirement~\eqref{gauge} is superfluous
and~\eqref{Hardy} holds for any smooth vector potential~$A$ 
satisfying $B = \rot A$.
In the complex case, however, 
the choice~\eqref{gauge} is an inevitable part of the theorem.

The restriction to two-dimensional magnetic fields in Theorem~\ref{Thm.Hardy}
is just for simplicity of the presentation.
Higher-dimensional analogues of Theorem~\ref{Thm.Hardy}
can be derived quite straightforwardly 
by combining the ideas of the present paper with the methods of~\cite{CK}.
Notice also that for non-trivial real-valued~$B$ 
the condition~\eqref{flux} is not needed provided that 
the weight $(1+|x|^2)^{-1}$ on the right-hand side of~\eqref{Hardy}
is replaced by $(1+|x|^2\log^2|x|)^{-1}$, see~\cite{CK}.
In this paper complex extensions of this robust result
are also provided 
(see Theorems~\ref{Thm.Hardy.log} and~\ref{Thm.Hardy.robust} below).
On the other hand, Theorem~\ref{Thm.Hardy} is sharp in the sense
that~\eqref{Hardy} cannot hold with a positive constant~$c$
provided that the condition~\eqref{flux} is violated
(see Remark~\ref{Rem.optimal} below).

Finally, let us notice that a singular Hardy weight 
on the right-hand side of~\eqref{Hardy} is admissible
provided that the vector potential is singular, too.
As an illustration, we consider the complex Aharonov-Bohm potential
\begin{equation}\label{AB}
  A_\infty(x) := (-x_2,x_1) \, \frac{\alpha}{|x|^2}
  \,, \qquad \mbox{where} \qquad
  \alpha \in \Com
  \,. 
\end{equation}
It can be obtained from~\eqref{gauge} by formally putting $B = 2\pi \alpha \delta$,
where~$\delta$ is the Dirac delta function.

\begin{Theorem}\label{Thm.AB}
Let the complex vector potential $A_\infty:\Real^2 \to \Com^2$
given by~\eqref{AB} satisfy the condition
\begin{equation}\label{flux.AB}
  \Re\alpha \not\in \Int
  \qquad \mbox{or} \qquad
  \Im\alpha \not=0
  \,.
\end{equation}
Then there exists a positive constant~$c_\infty$ 
depending on~$\alpha$
such that the following inequality holds
\begin{equation}\label{Hardy.AB}
  \forall \psi \in C_0^\infty(\Real^2\setminus\{0\}) \,, \qquad
  \int_{\Real^2} |\nabla_{\!A_\infty}\psi(x)|^2 \, \der x
  \geq c_\infty \int_{\Real^2} \frac{|\psi(x)|^2}{|x|^2} \, \der x
  \,.
\end{equation}
\end{Theorem}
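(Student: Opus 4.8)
The plan is to reduce the inequality to a one-dimensional spectral problem on the unit circle by separating variables in polar coordinates, exactly as in the classical Laptev--Weidl argument, but keeping track of the fact that the ``magnetic quantum number'' $\alpha$ is now complex. Writing $x = (r\cos\theta, r\sin\theta)$ and using the transverse condition $x\cdot A_\infty = 0$, the magnetic gradient splits as $|\nabla_{\!A_\infty}\psi|^2 = |\partial_r\psi|^2 + r^{-2}|(\partial_\theta - i\alpha)\psi|^2$ pointwise. Integrating over $\Real^2$ with $\der x = r\,\der r\,\der\theta$ and discarding the nonnegative radial term, one is left with
\begin{equation*}
  \int_{\Real^2} |\nabla_{\!A_\infty}\psi|^2 \, \der x
  \;\geq\; \int_0^\infty \frac{\der r}{r}\int_0^{2\pi}
  \bigl|(\partial_\theta - i\alpha)\psi(r,\theta)\bigr|^2 \, \der\theta ,
\end{equation*}
so it suffices to prove the angular inequality $\int_0^{2\pi}|(\partial_\theta - i\alpha)u|^2\,\der\theta \geq c_\infty \int_0^{2\pi}|u|^2\,\der\theta$ for all $u \in C^\infty(\Sphere^1)$, with $c_\infty > 0$ under hypothesis~\eqref{flux.AB}. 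This is where the momentum operator on the circle with a complex magnetic potential — the object announced in the abstract — enters.

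The key step is therefore the lower bound for the complex angular operator. First I would gauge out the potential by the (now non-unitary) substitution $u(\theta) = e^{i\alpha\theta}v(\theta)$; since $\alpha$ is complex this map does not preserve $L^2(\Sphere^1)$ periodicity unless one is careful, so instead I would argue directly with the Fourier series $u = \sum_{n\in\Int} u_n e^{in\theta}$. Then $(\partial_\theta - i\alpha)u = \sum_n i(n-\alpha)u_n e^{in\theta}$, and by Parseval
\begin{equation*}
  \int_0^{2\pi}\bigl|(\partial_\theta - i\alpha)u\bigr|^2\,\der\theta
  = 2\pi\sum_{n\in\Int} |n-\alpha|^2\,|u_n|^2
  \;\geq\; 2\pi\,\Bigl(\inf_{n\in\Int}|n-\alpha|^2\Bigr)\sum_{n\in\Int}|u_n|^2 ,
\end{equation*}
which is precisely $c_\infty\int_0^{2\pi}|u|^2\,\der\theta$ with $c_\infty := \inf_{n\in\Int}|n-\alpha|^2 = \dist(\alpha,\Int)^2$. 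The final point is that this constant is strictly positive exactly when $\alpha \notin \Int$; and writing $\alpha = \Re\alpha + i\Im\alpha$ one has $|n - \alpha|^2 = (n-\Re\alpha)^2 + (\Im\alpha)^2$, which is bounded below away from zero precisely when $\Re\alpha\notin\Int$ \emph{or} $\Im\alpha\neq 0$ — matching condition~\eqref{flux.AB}. Combining the angular inequality with the radial reduction and the identity $\int_{\Real^2}|x|^{-2}|\psi|^2\,\der x = \int_0^\infty r^{-1}\der r\int_0^{2\pi}|\psi|^2\,\der\theta$ yields~\eqref{Hardy.AB} with the explicit constant $c_\infty = \dist(\alpha,\Int)^2$.

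I expect the main obstacle to be purely technical rather than conceptual: justifying the separation-of-variables identity for the magnetic gradient when $A_\infty$ is singular at the origin — but this is handled by the restriction $\psi \in C_0^\infty(\Real^2\setminus\{0\})$, so that $\psi(r,\cdot)$ is smooth and $2\pi$-periodic for each fixed $r>0$ and vanishes for $r$ outside a compact subset of $(0,\infty)$, making all the Fourier manipulations and the exchange of integration order elementary. A secondary subtlety worth a remark is that, although the operator $(\partial_\theta - i\alpha)^*(\partial_\theta - i\alpha)$ is formally self-adjoint and has the real spectrum $\{|n-\alpha|^2\}_{n\in\Int}$, the ``true'' magnetic momentum $-i(\partial_\theta - i\alpha)$ is non-self-adjoint for $\Im\alpha\neq 0$, which is the phenomenon that the later sections of the paper exploit; for the present inequality only the former, quadratic-form, operator is needed.
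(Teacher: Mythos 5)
Your proof is correct, and it follows the same overall reduction as the paper---pass to polar coordinates, use the transverse property of $A_\infty$ to split the magnetic gradient, discard the radial term, and bound the angular form from below---but you establish the key angular lower bound by a genuinely different and more elementary route. The paper simply cites its Lemma~\ref{Lem.lambda}, i.e.\ the general machinery of Section~2 (compact resolvent of $P_a$, the spectrum formula~\eqref{spectrum}, the similarity transform $\Omega_a$), to conclude only that $\lambda_\alpha := \inf\sigma({P_\alpha}^{\!*}\,P_\alpha)>0$ under~\eqref{flux.AB}. You instead exploit the fact that $\alpha$ is constant, so that $-i\partial_\theta-\alpha$ is diagonal in the Fourier basis $\{e^{in\theta}\}_{n\in\Int}$, and Parseval gives the exact value
\[
  \lambda_\alpha \;=\; \inf_{n\in\Int}|n-\alpha|^2 \;=\; \dist(\Re\alpha,\Int)^2+(\Im\alpha)^2 ,
\]
which is positive precisely under~\eqref{flux.AB}. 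This buys you an explicit (indeed optimal) constant $c_\infty$, which the paper's argument does not supply when $\Im\alpha\neq 0$ (its quantitative remark after Lemma~\ref{Lem.lambda} assumes $\langle\Im a\rangle=0$); what you lose is generality, since the Fourier diagonalization breaks down for $\theta$-dependent angular potentials $a(r,\cdot)$, which is why the paper needs the heavier Lemma~\ref{Lem.lambda} for Theorem~\ref{Thm.Hardy}. Your decision to avoid the non-unitary substitution $u=e^{i\alpha\theta}v$ and to work directly with Fourier series is the right one, and your closing observation---that only the formally self-adjoint form operator ${P_\alpha}^{\!*}\,P_\alpha$, not the non-self-adjoint momentum itself, is needed for the inequality---is exactly the point the paper makes after Theorem~\ref{Thm.Hardy}.
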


This theorem is a complex extension of another result 
of Laptev and Weidl \cite[Thm.~3]{Laptev-Weidl_1999} in the real case.
The new observation here is that 
the imaginary part of~$\alpha$ leads to positivity
even if its real part is an integer. 

The proofs of Theorems~\ref{Thm.Hardy} and~\ref{Thm.AB} follow 
by extending
the original ideas of~\cite{Laptev-Weidl_1999} to the complex case.
In some aspects we rather follow the more recent approach 
of~\cite{K7} and~\cite{CK}.
The main idea is to use the transverse condition~\eqref{transverse}
and, by employing polar coordinates, reduce the problem 
to considering a magnetic Schr\"odinger operator on the circle.
This leads us to another interesting aspect of complex-valued
magnetic fields in quantum mechanics.

\subsection{Quasi-self-adjointness of momenta with complex magnetic fields}
Many sustained attempts have been made by physicists in recent years
to represent observables in quantum mechanics
by operators which satisfy certain 
physical-like symmetries instead of the self-adjointness.
This ``extension'' of quantum mechanics is often referred 
to as \emph{$\mathcal{PT}$-symmetric quantum mechanics} 
and its advent is usually 
associated with the pioneering work of Bender and Boettcher
from 1998 \cite{Bender-Boettcher_1998}.
The quotation marks are used here because 
nowadays it is commonly accepted that this unconventional representation 
of observables is consistent with fundamental axioms of quantum mechanics
if, and only if, 
the non-self-adjoint representative~$P$ is \emph{quasi-self-adjoint},
\ie
\begin{equation}\label{metric}
  P^* = \Theta \, P \, \Theta^{-1}
\end{equation}
with some positive, bounded and boundedly invertible operator~$\Theta$
called \emph{metric}
(the special choice $\Theta=I$ corresponds to self-adjointness).
But then one is back in the conventional quantum mechanics just by
modifying the inner product $(\cdot,\cdot)$ in the underlying
Hilbert space to $(\cdot,\Theta\cdot)$.
The notion of quasi-self-adjoint
(then called \emph{quasi-Hermitian}) operators
in quantum mechanics was first used 
by nuclear physicists Scholtz, Geyer and Hahne in 1992~\cite{GHS}, 
but it was actually considered previously
by the mathematician Dieudonn\'e as early as in 1961~\cite{Dieudonne_1961}.
We refer to the review article~\cite{KSTV}
and the book chapter~\cite{KS-book} for mathematical aspects 
of quasi-self-adjoint quantum mechanics.  

Given a non-self-adjoint operator~$P$ with real spectrum,
it is usually not easy to decide whether it is quasi-self-adjoint.
Even if the latter is known to hold for an operator~$P$
(like for instance for operators whose eigenfunctions form a Riesz basis), 
one cannot expect to be able to solve the operator equation~\eqref{metric}
and find the metric operator~$\Theta$ in a closed form. 
Distinguished exceptions are represented by 
one-dimensional Schr\"odinger operators 
with non-self-adjoint point interactions
on intervals
\cite{KBZ,K4,Siegl_2008,Ergun_2010,Ergun-Saglam_2010,Ergun_2013,KSZ}
or graphs
\cite{HKS,Astudillo-Kurasov-Usman_2015,Kurasov-Garjani_2017}.
We particularly refer to \cite{KSZ} where the general study 
of Sturm-Liouville operators with complex Robin boundary conditions
reveals that the metric can be expressed as the sum of the identity
and a Hilbert-Schmidt operator, reflecting the Bari basis property
of the eigenfunctions.
Moreover, spectacularly simple formulae of the integral kernels
are provided in specific ($\mathcal{PT}$-symmetric) situations. 

The second motivation of this paper comes from 
the relevance of complex magnetic fields in
these non-self-adjoint representations in quantum mechanics.
Papers on the imaginary magnetic field in quantum mechanics  
exist in the literature 
(see, \eg, \cite{Hatano-Nelson_1996,Hatano-Nelson_1998}),
but the quasi-self-adjointness does not seem to have been considered.
In this paper we introduce 
\emph{momenta with complex magnetic fields}, 
as a new class of non-self-adjoint toy models,
which are probably the simplest non-trivial examples of
quasi-self-adjoint operators in infinite-dimensional Hilbert spaces whatsoever.
It turns out that the basis properties of the eigenfunctions substantially
differ from the aforementioned models with point interactions.
Because of the technical relationship with the magnetic Hardy inequalities,
we restrict to momenta on the unit circle
$\Sphere \cong (-\pi,\pi)$,
but analogous models can be considered on intervals (even unbounded) and graphs.

\begin{Theorem}\label{Thm.momenta}
Given $a \in L^2(\Sphere;\Com)$, let $P_a$ be the maximal realisation
of the operator
\begin{equation}\label{momentum}
  -i \frac{\der}{\der x} - a(x)   
  \qquad \mbox{in} \qquad
  \sii(\Sphere) \,.
\end{equation}
The operator~$P_a$ satisfies the following properties:
\begin{enumerate}
\item[\emph{(i)}]
The similarity relation
\begin{equation}\label{similar}
  \Omega_a \, P_a \, {\Omega_a}^{\!\!-1} = P_{\langle a \rangle}
\end{equation}
holds, where
\begin{equation}\label{transform}
  (\Omega_a\psi)(x) := \exp\left(
  i \, \langle a \rangle \, x
  -i \int_{-\pi}^x a(\xi) \, \der\xi  
  \right) \psi(x)  
  \qquad \mbox{with} \qquad
  \langle a \rangle := \frac{1}{2\pi} \int_{-\pi}^\pi a(x) \, \der x 
  \,.
\end{equation}
\item[\emph{(ii)}]
The operator~$P_a$ is quasi-self-adjoint if, and only if, 
\begin{equation}\label{quasi}
  \langle \Im a \rangle  = 0 
  \,.
\end{equation}
In this case, $P_a$~satisfies the relation~\eqref{metric}
with the metric given by the multiplication operator
\begin{equation}\label{metric.ours}
  (\Theta_a\psi)(x) := \exp\left(
  2 \int_{-\pi}^x \Im a(\xi) \, \der\xi 
  \right)
  \psi(x)
  \,.
\end{equation}
\item[\emph{(iii)}]
The eigenfunctions of~$P_a$ form a Riesz basis,
but not a Bari basis unless $\Im a = 0$. 
\end{enumerate}
\end{Theorem}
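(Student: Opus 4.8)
The plan is to derive all three parts from a single explicit computation establishing~\eqref{similar}, combined with the classical fact that the exponentials $e_n(x):=(2\pi)^{-1/2}e^{inx}$, $n\in\Int$, form an orthonormal basis of $\sii(\Sphere)$ and are precisely the eigenfunctions of $P_c$ for every \emph{constant} $c\in\Com$, the eigenvalue $n-c$ being simple. To prove~(i) I would first observe that the maximal domain of $P_a$ is the periodic Sobolev space $W^{1,2}(\Sphere)$: if $\psi\in\sii(\Sphere)$ and $-i\psi'-a\psi\in\sii(\Sphere)$, then $a\psi\in\si(\Sphere)$ forces $\psi'\in\si(\Sphere)$, so $\psi$ is continuous, whence $a\psi\in\sii(\Sphere)$ and therefore $\psi'\in\sii(\Sphere)$; the same description holds for $\Dom(P_{\langle a\rangle})$. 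Writing $\varphi(x):=\langle a\rangle\,x-\int_{-\pi}^{x}a(\xi)\,\der\xi$, the very definition of $\langle a\rangle$ gives $\varphi(\pi)=\varphi(-\pi)=-\pi\langle a\rangle$, so $e^{i\varphi}$ is a well-defined function on $\Sphere$; continuity of $\varphi$ makes $|e^{i\varphi}|$ bounded above and below by positive constants, so the multiplication operator $\Omega_a$ of~\eqref{transform} is bounded and boundedly invertible on $\sii(\Sphere)$, and $(e^{i\varphi})'=i\varphi'e^{i\varphi}\in\sii(\Sphere)$ shows it maps $W^{1,2}(\Sphere)$ onto itself. Then~\eqref{similar} is just the pointwise identity $P_a\big(e^{-i\varphi}\chi\big)=e^{-i\varphi}\,P_{\langle a\rangle}\chi$, which follows from the product rule and $\varphi'=\langle a\rangle-a$.

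For~(ii), the necessity of~\eqref{quasi} is immediate: a quasi-self-adjoint operator is similar, through the positive square root of its metric, to a self-adjoint one and hence has real spectrum, whereas by~(i) $\sigma(P_a)=\sigma(P_{\langle a\rangle})=\{\,n-\langle a\rangle:n\in\Int\,\}$, which is real only if $\langle\Im a\rangle=0$. Conversely, if~\eqref{quasi} holds then $\langle a\rangle\in\Real$, so $P_{\langle a\rangle}$ is self-adjoint; taking adjoints in $P_a=\Omega_a^{-1}P_{\langle a\rangle}\Omega_a$ gives $P_a^{*}=\Omega_a^{*}P_{\langle a\rangle}\big(\Omega_a^{*}\big)^{-1}$, and inserting $P_{\langle a\rangle}=\Omega_aP_a\Omega_a^{-1}$ yields~\eqref{metric} with $\Theta_a:=\Omega_a^{*}\Omega_a$, an operator that is positive, bounded and boundedly invertible. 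Finally $\Theta_a$ is multiplication by $|e^{i\varphi}|^{2}=e^{-2\Im\varphi}$, and under~\eqref{quasi} one has $\Im\varphi(x)=-\int_{-\pi}^{x}\Im a(\xi)\,\der\xi$, so $\Theta_a$ coincides with~\eqref{metric.ours} — a weight that is periodic precisely because $\langle\Im a\rangle=0$, consistently with the hypothesis.

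For~(iii), since $\Omega_a^{-1}$ carries the eigenfunctions $e_n$ of $P_{\langle a\rangle}$ to eigenfunctions of $P_a$ with the same simple eigenvalues $n-\langle a\rangle$, and the $e_n$ form an orthonormal basis of $\sii(\Sphere)$, a complete eigensystem of $P_a$ is $\{\Omega_a^{-1}e_n\}_{n\in\Int}$ — the image of an orthonormal basis under a bounded boundedly invertible operator, hence a Riesz basis. Writing such a basis as $\{Se_n\}$ with $S:=\Omega_a^{-1}$, the standard criterion for quadratic proximity to an orthonormal basis says it is a Bari basis if and only if $S^{*}S-I$ is Hilbert-Schmidt; here $S^{*}S-I=\big(\Omega_a\Omega_a^{*}\big)^{-1}-I$ is the multiplication operator by $e^{2\Im\varphi}-1$, and a nonzero multiplication operator on $\sii(\Sphere)$ is never compact, let alone Hilbert-Schmidt. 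Hence the Bari property is equivalent to $\Im\varphi\equiv0$; differentiating, $\Im\varphi\equiv0$ forces $\Im a$ to be a.e.\ constant equal to $\langle\Im a\rangle$, which then must vanish, i.e.\ $\Im a=0$ a.e.\ — and in that case $e^{i\varphi}$ is unimodular, $\Omega_a$ is unitary, and the eigenfunctions form an orthonormal basis, trivially a Bari basis.

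The conceptual core is the intertwining identity in~(i); the points I expect to require the most care, although they are routine rather than deep, are the regularity bootstrap identifying $\Dom(P_a)=W^{1,2}(\Sphere)$ for merely $\sii$-coefficients together with the check that $\Omega_a$ preserves this domain, and invoking the precise form of the Bari-basis criterion that reduces~(iii) to the elementary observation that a nontrivial multiplication operator on $\sii(\Sphere)$ cannot be Hilbert-Schmidt.
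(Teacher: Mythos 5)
Your proof is correct and, for parts (i), (ii) and the Riesz-basis half of (iii), it is essentially the paper's argument: everything is driven by the explicit intertwining $\Omega_a P_a \Omega_a^{-1}=P_{\langle a\rangle}$, quasi-self-adjointness follows from $\Theta_a=\Omega_a^{*}\Omega_a$ in one direction and from reality of the explicit spectrum $\{m-\langle a\rangle\}_{m\in\Int}$ in the other, and the eigenfunctions form a Riesz basis because they are the image of the Fourier basis under the bounded, boundedly invertible $\Omega_a^{-1}$. (You are in fact slightly more careful than the paper about the words ``maximal realisation'': you bootstrap the maximal domain down to periodic $W^{1,2}$, whereas the paper simply defines $P_a$ on that domain and checks closedness via a relative-boundedness estimate.) The one point where you genuinely diverge is the negative Bari statement. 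The paper quotes the Gohberg--Krein criterion that a Bari basis and its biorthogonal system must be quadratically close, and computes $|\psi_m-\phi_m|=|\xi-\xi^{-1}|/\sqrt{2\pi}$ pointwise, independent of $m$, so that $\sum_m\|\psi_m-\phi_m\|^2$ diverges unless $\xi\equiv 1$. You instead use the equivalent criterion that $\{Se_n\}$ is a Bari basis iff $S^{*}S-I$ is Hilbert--Schmidt, and observe that here $S^{*}S-I$ is multiplication by $e^{2\Im\varphi}-1$, hence not even compact unless it vanishes. Both reductions are one line once the respective form of the criterion is invoked; yours makes the obstruction visibly operator-theoretic (a nontrivial multiplication operator on $L^2$ of an interval is never compact), while the paper's produces a completely explicit divergent series. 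One micro-detail to patch in your last step: differentiating $\Im\varphi\equiv 0$ only yields $\Im a\equiv\langle\Im a\rangle$; to conclude that this constant vanishes you must also use the endpoint value $\Im\varphi(-\pi)=-\pi\langle\Im a\rangle=0$ (equivalently, for $\Im a\equiv c$ one has $\Im\varphi\equiv -c\pi$, which is zero only if $c=0$). This is a one-line fix and does not affect the validity of the argument.
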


We emphasise that the similarity relation~\eqref{similar} holds in general,
in particular without assuming~\eqref{quasi}, 
but the transformed operator~$P_{\langle a \rangle}$
is self-adjoint if, and only if, this condition holds.
Condition~\eqref{quasi} is a complex extension 
of the requirement that the magnetic field on the circle
can be gauged out. 
Notice also (\cf~\eqref{sa})
that~$P_a$ is self-adjoint if, and only if, 
$\Im a = 0$,
while~\eqref{quasi} can be of course satisfied 
in more general situations.

\section{Momenta with complex magnetic fields}
%
In this section we restrict to the one-dimensional model~\eqref{momentum}.
We identify the unit circle~$\Sphere$ with the open interval $(-\pi,\pi)$
where the boundary points~$\pm\pi$ are glued together.

\subsection{Definition}
We introduce~$P_a$ as the operator in $\sii((-\pi,\pi))$ defined by
\begin{equation}\label{operator}
  (P_a \psi)(x) := -i\psi'(x) - a(x)\psi(x)
  \,, \qquad 
  \Dom P_a := \left\{
  \psi \in W^{1,2}((-\pi,\pi)) :
  \psi(-\pi) = \psi(\pi)
  \right\}
  \,.
\end{equation}
Our standing assumption is that the function
$a : (-\pi,\pi) \to \Com$ representing a complex-valued magnetic potential
is square-integrable, 
\ie, 
\begin{equation}\label{standing}
  a \in \sii((-\pi,\pi))
  \,.
\end{equation}
In the magnetic-free case, 
$P_0$~is the usual momentum operator in quantum mechanics,
which is well known to be self-adjoint.

Under the hypothesis~\eqref{standing}, 
the magnetic part of~$P_a$ is a small perturbation of~$P_0$.
Indeed, for every $\psi \in W^{1,2}((-\pi,\pi))$ and positive~$\eps$,
we have the bound
\begin{equation}\label{relative}
  \|a\psi\|^2 \leq \|a\|^2 \|\psi\|_\infty^2
  \leq \eps \, \|a\|^2 \|\psi'\|^2 
  + \|a\|^2 \left( \frac{1}{\eps} + \frac{1}{2\pi} \right) \|\psi\|^2
  \,,
\end{equation}
where $\|\cdot\|$ is the norm of $\sii((-\pi,\pi))$
and $\|\cdot\|_\infty$ is the supremum norm.
Here the second inequality is due to the Sobolev embedding
$W^{1,2}((-\pi,\pi)) \hookrightarrow C^0([-\pi,\pi])$
quantified by
$$
  \|\psi\|_\infty^2 \leq 2 \|\psi\| \|\psi'\| + \frac{1}{2\pi} \|\psi\|^2
$$
and an elementary Young-type inequality. 
It follows from~\eqref{relative} that~$a$,
considered as the multiplication operator in $\sii((-\pi,\pi))$,
is relatively bounded with respect to~$P_0$,
with the relative bound equal to zero.
By standard perturbation results \cite[Sec.~IV.1]{Kato}, 
$P_a$ is a well defined closed operator 
with compact resolvent.

\begin{Remark}\label{Rem.weight}
Let $\psi \in W^{1,2}((-\pi,\pi)) $
and consider the changed function 
$$
  \phi(x) := w(x) \, \psi(x)
  \qquad \mbox{with} \qquad
  w(x) := \exp\left(
  \int_{-\pi}^x \Im a(\xi) \, \der\xi
  \right) 
  \,.
$$
In view of the obvious relationship between quadratic forms
$$
  \int_{-\pi}^\pi |(P_a\psi)(x)|^2 \, \der x
  = \int_{-\pi}^\pi |(P_{\Re a}\psi)(x)|^2 \, w^{-2}(x) \, \der x
  \,,
$$
one can interpret the imaginary part of the magnetic field
as a way to handle the conventional (self-adjoint) 
magnetic Schr\"odinger operator in a weighted space.
Notice also that $\phi \in \Dom P_a$
provided that $\psi \in \Dom P_a$
and the condition~\eqref{quasi} holds.
\end{Remark}

\subsection{Obvious symmetries}
The adjoint of~$P_a$ satisfies
\begin{equation}\label{sa}
  {P_a}^{\!*} = P_{\bar{a}} \,,
\end{equation}
where~$\bar{a}$ denotes the complex conjugate of~$a$. 
Consequently, $P_a$~is self-adjoint if, and only if, 
the imaginary part of~$a$ identically equals zero, 
\ie~$\Im a = 0$.

Let the \emph{time-reversal} operator~$\mathcal{T}$
and \emph{parity} operator~$\mathcal{P}$ be defined by 
the usual involutions
$$
  (\mathcal{T}\psi)(x) := \overline{\psi(x)}
  \qquad \mbox{and} \qquad
  (\mathcal{P}\psi)(x) := \psi(-x)
  \,.
$$
Then we obviously have the relations
$$
  \mathcal{T} P_a \mathcal{T} = - P_{-\mathcal{T}a}
  \,, \qquad 
  \mathcal{P} P_a \mathcal{P} = - P_{-\mathcal{P}a}
  \qquad \mbox{and} \qquad
  (\mathcal{PT}) P_a (\mathcal{PT}) = P_{\mathcal{PT}a}
  \,.
$$
Consequently, $P_a$~is anti-$\mathcal{P}$-self-adjoint
(\ie, ${P_a}^{\!*} = -\mathcal{P} P_a \mathcal{P}$) if, and only if,
$\Re a$ is odd and $\Im a$ is even;
and $P_a$~is $\mathcal{PT}$-symmetric
(\ie, $[P_a,\mathcal{PT}]=0$) if, and only if,
$\Re a$ is even and $\Im a$ is odd.

The $\mathcal{PT}$-symmetry of~$P_a$
is a sufficient (but not necessary) condition
to guarantee the quasi-self-adjointness condition~\eqref{quasi}.

\begin{Remark}
Instead of the momentum operator~$P_a$,
one can also consider the non-self-adjoint magnetic Hamiltonian 
$H_a := {P_a}^{\!2}$.
Then it follows that~$H_a$ is $\mathcal{P}$-self-adjoint
(\ie, ${H_a}^{\!*} = \mathcal{P} H_a \mathcal{P}$) if, and only if,
$\Re a$ is odd and $\Im a$ is even;
and $H_a$~is $\mathcal{PT}$-symmetric if, and only if,
$\Re a$ is even and $\Im a$ is odd.
This is an example of operator where these two notions 
are not compatible unless $a=0$.
At the same time, $H_a$ is not $\mathcal{T}$-self-adjoint unless $a=0$.
\end{Remark}

\subsection{Spectrum}
Since~$P_a$ has a compact resolvent, its spectrum is purely discrete,
\ie~composed of isolated eigenvalues with finite algebraic multiplicities.
In this one-dimensional setting, finding the eigenvalues is a routine:
solving the differential equation $-i\psi'-a(x)\psi=\lambda\psi$
in terms of exponential functions and subjecting the solutions 
to the periodic boundary conditions of~\eqref{operator},
one immediately arrives at
\begin{equation}\label{spectrum}
  \sigma(P_a) = \{ m - \langle a\rangle \}_{m \in \Int}
  \,.
\end{equation}

Notice that the spectrum is real if, and only if, 
the quasi-self-adjointness condition~\eqref{quasi} holds.
Again, the $\mathcal{PT}$-symmetry of~$P_a$ 
is a sufficient (but not necessary) condition
to guarantee that the spectrum is real.
Irrespectively of whether~$P_a$ is $\mathcal{PT}$-symmetric or not, 
the spectrum of the adjoint~${P_a}^{\!*}$ is obtained  
by mirroring the eigenvalues of~$P_a$ with respect to the real axis.

By the procedure described above, one also finds 
the eigenfunctions of~$P_a$ 
corresponding to the eigenvalues in~\eqref{spectrum},
$$
  \psi_m(x) := \frac{1}{\sqrt{2\pi}} \,
  \exp\left(
  i \, (m-\langle a \rangle) \, x 
  + i \int_{-\pi}^x a(\xi) \, \der\xi 
  \right)
  \,.
$$
The eigenfunctions of the adjoint~${P_a}^{\!*}$
are given by
$$
  \phi_m(x) := \frac{1}{\sqrt{2\pi}} \,
  \exp\left(
  i \, (m-\langle \bar{a} \rangle) \, x 
  + i \int_{-\pi}^x \bar{a}(\xi) \, \der\xi 
  \right)
  \,.
$$
The normalisation factors are chosen in such a way that
the standard biorthogonal condition

\begin{equation}\label{bi}
  (\phi_n,\psi_m) = \delta_{nm}
\end{equation}
holds for every $m,n \in \Int$,
where $(\cdot,\cdot)$ denotes the inner product of $\sii((-\pi,\pi))$.
This condition reduces to 
$
  (\psi_n,\psi_m) = \delta_{nm}
$
if $\Im a =0$ (so that~$P_a$ is self-adjoint).

The availability of the condition~\eqref{bi} ensures
that the spectrum of~$P_a$ is semisimple 
(\ie~the algebraic and geometric multiplicities 
of the eigenvalues are equal).
From~\eqref{spectrum} we then immediately conclude
that the spectrum is actually simple 
(\ie~all the eigenvalues have multiplicity one).

\subsection{Basis properties}
For various basis properties of a sequence in a Hilbert space,
we refer to \cite{Gohberg-Krein_1969} and \cite[Sec.~3]{Davies_2007}.

The principal observation in our case is that,
for every $m \in \Int$,
\begin{equation}\label{crucial}
  \psi_m(x) = \xi(x) \, e_m(x) 
  \qquad \mbox{and} \qquad
  \phi_m(x) = \xi^{-1}(x) \, e_m(x) 
  \,,
\end{equation}
where
$$
  \xi(x) :=
  \exp\left(  
  \langle \Im a \rangle \, x - \int_{-\pi}^x \Im a(\xi) \, \der\xi
  \right)
$$
is a bounded and positive function on $[-\pi,\pi]$,
while $\{e_m\}_{m \in \Int}$ is an orthonormal basis in $\sii((-\pi,\pi))$
(for it corresponds to the choice $\Im a = 0$ when $P_a$ is self-adjoint).
Consequently, it immediately follows that the sequence 
$\{\psi_m\}_{m \in \Int}$ is \emph{almost normalised} in $\sii((-\pi,\pi))$
in the sense that 
$$
  \inf_{m \in \Int}\|\psi_m\| > 0
  \qquad \mbox{and} \qquad
  \sup_{m \in \Int}\|\psi_m\| < \infty 
  \,.
$$

Next, assuming $(\psi_m,\psi) = 0$ for every $m \in \Int$
and using the completeness of $\{e_m\}_{m \in \Int}$,
we conclude that $\xi \psi = 0$, and therefore $\psi = 0$.
This argument shows that $\{\psi_m\}_{m \in \Int}$
is a \emph{complete} set in $\sii((-\pi,\pi))$.
In view of~\eqref{bi}, we also know that $\{\psi_m\}_{m \in \Int}$
is \emph{minimal complete} (\ie~removal of any element makes it incomplete).

Similarly, since $\{e_m\}_{m \in \Int}$ is in particular 
a (\emph{Schauder} or \emph{conditional}) \emph{basis}, 
given any function $\psi \in \sii((-\pi,\pi))$,
we have the unique decomposition
$$
  \psi 
  = \sum_{m \in \Int} (e_m,\psi) \, e_m
  = \sum_{m \in \Int} (\xi^{-1} e_m,\psi) \, \xi e_m
  \,.
$$
Using~\eqref{crucial},
we conclude that $\{\psi_m\}_{m \in \Int}$
is a basis in $\sii((-\pi,\pi))$.

In fact, having the completeness of $\{\psi_m\}_{m \in \Int}$,
we immediately conclude from~\eqref{crucial} that
$\{\psi_m\}_{m \in \Int}$ is a \emph{Riesz} (or \emph{unconditional}) \emph{basis} 
in $\sii((-\pi,\pi))$,
for it is obviously equivalent to an orthonormal basis
(\cf~\cite[Thm.~3.4.5]{Davies_2007}).
This shows one part of the claim~(iii) of Theorem~\ref{Thm.momenta}.
Because of the biorthogonality relation~\eqref{bi},
we also know that $\{\phi_m\}_{m \in \Int}$ 
is a Riesz basis in $\sii((-\pi,\pi))$.

It remains to prove that $\{\psi_m\}_{m \in \Int}$
is not a \emph{Bari basis} if $\Im a \not= 0$, 
meaning that it is not quadratically close to any orthonormal basis.
By \cite[Sec.~VI.3]{Gohberg-Krein_1969},
it is enough to show that the biorthogonal bases
$\{\psi_m\}_{m \in \Int}$ and $\{\phi_m\}_{m \in \Int}$
are not quadratically close to each other.
Using that $|e_m| = 1/\sqrt{2\pi}$ for every $m \in \Int$
and~\eqref{crucial}, we arrive at
$|\psi_m-\phi_m| = |\xi-\xi^{-1}|/\sqrt{2\pi}$
which is independent of~$m$.
Consequently,
$$
  \sum_{m \in \Int} \|\psi_m-\phi_m\|^2 = \infty
$$
unless $\xi=\xi^{-1}$, 
which is possible only if $\Im a = 0$.
This concludes the proof of the claim~(iii) of Theorem~\ref{Thm.momenta}.

\subsection{Hidden symmetries}
Now we turn to the proof of the remaining claims 
of Theorem~\ref{Thm.momenta}.

First of all, notice that~$\Omega_a$ 
from the claim~(i) of Theorem~\ref{Thm.momenta}
is a multiplication operator
generated by a bounded and non-zero differentiable function on~$[-\pi,\pi]$.
A formal verification of 
the similarity relation~\eqref{similar} is a routine.
To make it rigorous, it remains to show that
$\Omega_a$ leaves the domain of~$P_a$
(which is independent of~$a$, see~\eqref{operator}) invariant. 
The principal step is to verify the invariance 
of the periodic boundary conditions.
Given any $\psi \in W^{1,2}((-\pi,\pi))$, 
it is straightforward to check (recall also~\eqref{relative})
that $\Omega_a\psi \in W^{1,2}((-\pi,\pi))$ and 
$$
  (\Omega_a\psi)(-\pi) 
  = e^{-i \langle a \rangle \pi} \psi(-\pi) 
  \,, \qquad
  (\Omega_a\psi)(\pi) 
  = e^{i \langle a \rangle \pi} e^{-i 2\pi \langle a \rangle } \psi(-\pi)
  \,.
$$
Hence, if~$\psi$ additionally satisfies $\psi(\pi) = \psi(-\pi)$,
then it is also the case for $\Omega_a\psi$.
Summing up, we have verified the claim~(i) of Theorem~\ref{Thm.momenta}.

Now we turn to the claim~(ii) of Theorem~\ref{Thm.momenta}.
The operator~$P_{\langle a \rangle}$ is self-adjoint
if, and only if, the condition~\eqref{quasi} holds.
In this case, $P_a$ is similar to a self-adjoint operator
in view of~\eqref{similar}
and it clearly satisfies the relation~\eqref{metric}
with the metric operator 
$
  \Theta_a = {\Omega_a}^{\!\!*} \, \Omega_a
$,
which coincides with formula~\eqref{metric.ours}.
Conversely, if~$P_a$ satisfies the relation~\eqref{metric}
with some positive, bounded and boundedly invertible operator~$\Theta$,
then it is clearly similar to the self-adjoint operator 
$\Theta^{1/2} P_a \Theta^{-1/2}$.
Hence, the spectrum of~$P_a$ is necessarily real,
which is equivalent to the condition~\eqref{quasi}
due to the explicit formula~\eqref{spectrum}.
So we are back in the situation 
where~$P_{\langle a \rangle}$ is self-adjoint.
This concludes the proof of Theorem~\ref{Thm.momenta}.

\begin{Remark}
Irrespectively of whether the condition~\eqref{quasi} holds or not,
$P_a$~is always similar to the \emph{normal} operator~$P_{\langle a \rangle}$.
\end{Remark}
%

\section{Hardy inequalities with complex magnetic fields}
%
This section is devoted to proofs of 
Theorems~\ref{Thm.Hardy} and~\ref{Thm.AB} from the introduction
and other related results.
The standing assumption for regular fields 
is that the function $B: \Real^2 \to \Com$ representing 
a complex magnetic field is (infinitely) smooth, 
\ie\
$
  B \in C^\infty(\Real^2)
$.
For the proof of Theorem~\ref{Thm.Hardy} 
we shall additionally assume that~$B$ has a compact support,
\ie\
$
  B \in C_0^\infty(\Real^2)
$.
It will be clear from the proof below
that much less is actually needed to establish Theorem~\ref{Thm.Hardy}.
Since~$B$ is complex-valued, the freedom of the gauge invariance is lost
and we stick (except for Section~\ref{Sec.robust}) to the choice of 
the corresponding vector potential $A:\Real^2 \to \Com^2$
given by~\eqref{gauge}.
Because of the smoothness assumption about the field~$B$,
we then also have
$
  A \in C^\infty(\Real^2;\Com^2)
$.

\subsection{Passing to polar coordinates}
The main idea coming back to~\cite{Laptev-Weidl_1999}
(see also~\cite{K7} and~\cite{CK})
is to employ polar coordinates 
$x=(x_1,x_2) = (r\cos\theta,r\sin\theta)$
with $x \in \Real^2$ 
and $r \in (0,\infty)$, $\theta \in (-\pi,\pi)$.
Then~\eqref{gauge} acquires a more transparent form
\begin{equation}\label{passing}
  A(x) = (-\sin\theta,\cos\theta) \,
  \frac{a(r,\theta)}{r}
  \,, \qquad \mbox{where} \qquad
  a(r,\theta) := \int_0^r B(t \cos\theta,t\sin\theta) \, t \, \der t
  \,.
\end{equation}
In particular, we have
$
  a \in C^\infty((0,\infty)\times[-\pi,\pi])
$.

Recall the mean-value notation $\langle \cdot \rangle$
introduced in~\eqref{transform}.
If~$B$ has a compact support, 
then there clearly exists a positive number~$R$ such that
$
  \langle a(r,\cdot) \rangle = \langle a(R,\cdot) \rangle
$ 
for all $r \geq R$.
The flux condition~\eqref{flux} is then equivalent to
\begin{equation}\label{flux.intro.bis}
  \lim_{r \to \infty}
  \big\langle \Re a(r,\cdot) \big\rangle \not\in \Int
  \qquad \mbox{or} \qquad
  \lim_{r \to \infty}
  \big\langle \Im a(r,\cdot) \big\rangle \not=0
  \,.
\end{equation}
Occasionally, we shall also consider the weaker condition
\begin{equation}\label{flux.bis}
  \exists r \in (0,\infty), \qquad
  \big\langle \Re a(r,\cdot) \big\rangle \not\in \Int
  \qquad \mbox{or} \qquad
  \big\langle \Im a(r,\cdot) \big\rangle \not=0
  \,,
\end{equation}
for which we do not need to assume that~$B$ is compactly supported.	

Let $\psi \in C_0^\infty(\Real^2)$.
Writing $\phi(r,\theta) := \psi(r\cos\theta,r\sin\theta)$
and using the transverse property~\eqref{transverse} of~$A$,
we have
\begin{equation}\label{polar}
  \int_{\Real^2} |\nabla_{\!A}\psi(x)|^2 \, \der x
  = \int_0^\infty \int_{-\pi}^\pi 
  \left(
  |\partial_r\phi(r,\theta)|^2 
  + \frac{|-i\partial_\theta\phi(r,\theta)-a(r,\theta)\phi(r,\theta)|^2}{r^2}
  \right)
  \der\theta \, r \, \der r 
  \,.
\end{equation}
Recalling~\eqref{operator},
it is clear that the concept of momenta on the circle 
with complex magnetic fields developed in the precedent section
will play an important role in the sequel.

\subsection{Local Hardy inequalities}
For every positive~$r$, let us define a non-negative number
\begin{equation}\label{lambda}
  \lambda_a(r) := \inf_{\stackrel[\varphi\not=0]{}{
  \varphi \in \Dom P_a}}
  \frac{\displaystyle
  \int_{-\pi}^\pi 
  |(P_{a(r,\cdot)}\varphi)(\theta)|^2 \, \der\theta}
  {\displaystyle
  \int_{-\pi}^\pi 
  |\varphi(\theta)|^2 \, \der\theta}
  \,,
\end{equation}
where $\Dom P_a$ is introduced in~\eqref{operator}.
Clearly, $\lambda_a(r)$ is the lowest point in the spectrum
of the self-adjoint operator 
${P_{a(r,\cdot)}}^{\!\!*} \, P_{a(r,\cdot)}$
in $\sii((-\pi,\pi))$,
where~$r$ is regarded as a parameter.
If~$B$ has a compact support, 
then there clearly exists a positive number~$R$ such that
$
  \lambda_a(r) = \lambda_a(R)
$ 
for all $r \geq R$.

Using the definition~\eqref{lambda} in~\eqref{polar}
together with Fubini's theorem,
we arrive at the following type of Hardy inequality
\begin{equation}\label{local}
\begin{aligned}
  \int_{\Real^2} |\nabla_{\!A}\psi(x)|^2 \, \der x
  &\geq \int_0^\infty \int_{-\pi}^\pi 
  \left(
  |\partial_r\phi(r,\theta)|^2 
  + \frac{\lambda_a(r)}{r^2} \, |\phi(r,\theta)|^2 
  \right)
  \der\theta \, r \, \der r 
  \\
  &\geq \int_0^\infty \int_{-\pi}^\pi 
  \frac{\lambda_a(r)}{r^2} \, |\phi(r,\theta)|^2 
  \, \der\theta \, r \, \der r 
  \\
  &= \int_{\Real^2} \frac{\lambda_a(|x|)}{|x|^2} \, |\psi(x)|^2 \, \der x
  \,.
\end{aligned}
\end{equation}
We remark (\cf~\eqref{passing}) that 
$\lambda_a(r) = \mathcal{O}(r^2)$ as $r \to 0$,
so that $\lambda_a(r)/r^2$ has actually no singularity at $r=0$.

We call the result~\eqref{local} a \emph{local} Hardy inequality
because~$\lambda_a$ may not be everywhere positive.  
However, the following lemma ensures that~$\lambda_a$
is a non-trivial non-zero function on $(0,\infty)$
whenever the condition~\eqref{flux.bis} holds.

\begin{Lemma}\label{Lem.lambda}
The function $r \mapsto \lambda_a(r)$ is continuous.
Given any $r \in (0,\infty)$,  
$\lambda_a(r) = 0$ if, and only if,
\begin{equation}\label{antiflux}
  \big\langle \Re a(r,\cdot) \big\rangle \in \Int
  \qquad \mbox{and} \qquad
  \big\langle \Im a(r,\cdot) \big\rangle =0
  \,.
\end{equation}
\end{Lemma}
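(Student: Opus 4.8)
The plan is to reduce the problem to spectral information already obtained for the one–dimensional momentum operator. Fix $r \in (0,\infty)$ and write $b := a(r,\cdot) \in \sii((-\pi,\pi))$ (indeed $b \in C^\infty([-\pi,\pi])$ by~\eqref{passing}). By the very definition~\eqref{lambda}, the number $\lambda_a(r)$ is the bottom of the spectrum of the non-negative self-adjoint operator ${P_b}^{\!*} P_b = P_{\bar b} P_b$; since this operator has compact resolvent, $\lambda_a(r)$ is its smallest eigenvalue, and $\lambda_a(r) = 0$ if and only if $0$ is an eigenvalue, i.e.\ if and only if $\ker P_b \not= \{0\}$. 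Now $P_b$ itself has compact resolvent and, by the explicit computation leading to~\eqref{spectrum}, $\sigma(P_b) = \{m - \langle b\rangle\}_{m\in\Int}$. Hence $0 \in \sigma(P_b)$ precisely when $\langle b\rangle \in \Int$, that is, when $\langle \Re b\rangle \in \Int$ and $\langle \Im b\rangle = 0$; and because the eigenvalues of $P_b$ are simple (shown in the Spectrum subsection), this is exactly condition~\eqref{antiflux}. This settles the equivalence.

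For the continuity of $r \mapsto \lambda_a(r)$, I would argue by the stability of the bottom eigenvalue under the perturbation $b(\cdot) = a(r,\cdot)$ as $r$ varies. From~\eqref{passing}, the map $r \mapsto a(r,\cdot)$ is continuous (even $C^\infty$) from $(0,\infty)$ into $C^0([-\pi,\pi])$, hence into $\sii((-\pi,\pi))$ with its sup-norm controlling the $\sii$-norm. The relative bound estimate~\eqref{relative}, applied to the difference $a(r,\cdot) - a(r',\cdot)$, shows that the multiplication operator $a(r,\cdot) - a(r',\cdot)$ is $P_0$-bounded with relative bound zero and, more to the point, with bound on the graph norm going to $0$ as $r' \to r$; consequently $P_{a(r',\cdot)} \to P_{a(r,\cdot)}$ in the generalised (norm-resolvent) sense, and the same for the self-adjoint products ${P_{a(r',\cdot)}}^{\!*}P_{a(r',\cdot)}$. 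Norm-resolvent convergence of operators with compact resolvent implies convergence of individual eigenvalues, in particular of the lowest one, giving $\lambda_a(r') \to \lambda_a(r)$.

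Alternatively — and perhaps more cleanly — one can bypass abstract perturbation theory altogether and use the closed-form spectrum. Since ${P_b}^{\!*}P_b$ is normal (it is self-adjoint) and $P_b$ is similar to the normal operator $P_{\langle b\rangle}$ via the bounded, boundedly invertible multiplier $\Omega_b$ of~\eqref{transform}, one could instead compute $\lambda_a(r)$ more explicitly through the min–max characterisation restricted to the one-parameter family of eigenfunctions; but the eigenfunctions of ${P_b}^{\!*}P_b$ are not simply the $\psi_m$, so this route requires a little care. In any case, the continuity also follows directly from the fact that $\lambda_a(r)$ is the infimum over a fixed family of Rayleigh quotients whose numerator $\int_{-\pi}^\pi |(P_{a(r,\cdot)}\varphi)|^2$ depends continuously on $r$, uniformly for $\varphi$ in the unit ball of $\Dom P_0$ with bounded graph norm; an $\eps/3$ argument then yields both upper and lower semicontinuity of the infimum.

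The main obstacle is purely technical rather than conceptual: making the continuity statement rigorous requires a uniform control, as $r$ ranges over a compact subinterval of $(0,\infty)$, of how the Rayleigh quotient in~\eqref{lambda} changes with $r$, and this hinges on the uniform boundedness of $\|a(r,\cdot) - a(r',\cdot)\|$ (or $\|a(r,\cdot) - a(r',\cdot)\|_\infty$) together with the relative-boundedness inequality~\eqref{relative}. The equivalence part, by contrast, is immediate once one identifies $\lambda_a(r) = 0$ with $0 \in \sigma(P_{a(r,\cdot)})$ and invokes~\eqref{spectrum}. I would present the equivalence first, then dispatch the continuity with the norm-resolvent-convergence argument.
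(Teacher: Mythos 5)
Your argument is correct, and for the equivalence it is actually a slightly cleaner route than the paper's. The paper proves the ``if'' direction via the similarity relation~\eqref{similar}, deriving the quantitative upper bound $\lambda_a(r) \leq \kappa_a(r)^2\,\dist(\langle\Re a(r,\cdot)\rangle,\Int)^2$ with the condition number of $\Omega_{a(r,\cdot)}$, and only uses ``$\lambda_a(r)=0 \Rightarrow 0\in\sigma(P_{a(r,\cdot)})$'' for the converse; you instead run the single equivalence $\lambda_a(r)=0 \Leftrightarrow \ker P_{a(r,\cdot)}\neq\{0\} \Leftrightarrow 0\in\sigma(P_{a(r,\cdot)})$ in both directions and read off the answer from~\eqref{spectrum}. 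This is legitimate because both $P_b$ and ${P_b}^{\!*}P_b$ have discrete spectrum, so $0\in\sigma(P_b)$ iff $0$ is an eigenvalue, and $\ker {P_b}^{\!*}P_b=\ker P_b$; what you lose relative to the paper is the explicit two-sided estimate on $\lambda_a(r)$ in terms of $\dist(\langle\Re a(r,\cdot)\rangle,\Int)$, which the paper records in a subsequent remark. For the continuity, the paper swaps the minimiser $\varphi_{r_2}$ into the Rayleigh quotient at $r_1$ and gets the elementary Lipschitz bound $|\sqrt{\lambda_a(r_1)}-\sqrt{\lambda_a(r_2)}|\leq\sup_\theta|a(r_1,\theta)-a(r_2,\theta)|$; your primary route via norm-resolvent convergence reaches the same conclusion but is heavier, and the one step you assert without justification --- that norm-resolvent convergence of $P_{a(r',\cdot)}$ carries over to the products ${P_{a(r',\cdot)}}^{\!*}P_{a(r',\cdot)}$ --- does need an argument (e.g.\ via the quadratic forms $\varphi\mapsto\|P_{a(r,\cdot)}\varphi\|^2$, whose square roots differ by at most $\sigma(r,r')\|\varphi\|$). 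Note also that in your ``alternative'' $\eps/3$ argument the uniformity you need is over the $\sii$-unit ball, not over a graph-norm ball: since $a(r,\cdot)-a(r',\cdot)$ acts as a bounded multiplier, the numerators already satisfy $\bigl|\,\|P_{a(r,\cdot)}\varphi\|-\|P_{a(r',\cdot)}\varphi\|\,\bigr|\leq\sigma(r,r')\|\varphi\|$ with no control on $\|\varphi'\|$ required --- which is precisely the paper's proof.
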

\begin{proof}
Since $\Dom P_a \subset W^{1,2}((-\pi,\pi))$ 
is compactly embedded in $\sii((-\pi,\pi))$,
the spectrum of the operator 
${P_{a(r,\cdot)}}^{\!\!*} \, P_{a(r,\cdot)}$
is purely discrete.
Consequently, the infimum in~\eqref{lambda} is achieved
and $\lambda_a(r)$ is the lowest eigenvalue of this operator.
Let us denote by~$\varphi_r$ a corresponding eigenfunction.
 
First of all, we notice that,
using a constant as a test function in~\eqref{lambda},
we have the upper bound
\begin{equation}\label{bound}
  \lambda_a(r) \leq \langle |a(r,\cdot)|^2 \rangle 
  \,.
\end{equation}
Using~$\varphi_{r_2}$ as a test function for $\lambda_a(r_1)$,
it follows from~\eqref{lambda} that
$$
  \lambda_a(r_1) 
  \leq 
  \frac{\displaystyle
  \int_{-\pi}^\pi 
  \big|(P_{a(r_2,\cdot)}\varphi_{r_2})(\theta)
  + [a(r_2,\theta)-a(r_1,\theta)]\varphi_{r_2}(\theta)\big|^2 
  \, \der\theta}
  {\displaystyle
  \int_{-\pi}^\pi 
  |\varphi_{r_2}(\theta)|^2 \, \der\theta}
  \leq 
  \left(
  \sqrt{\lambda_a(r_2)} 
  + \sigma(r_1,r_2)
  \right)^2
$$
with
$$
  \sigma(r_1,r_2) := 
  \sup_{\theta\in(-\pi,\pi)} |a(r_2,\theta)-a(r_1,\theta)|
  \,.
$$
Using additionally the analogous estimate obtained by
reversing the role of~$r_1$ and~$r_2$ together with~\eqref{bound}, 
one eventually gets
$$
  |\lambda_a(r_2) - \lambda_a(r_1)|
  \leq \sigma(r_1,r_2)^2
  + 2 \, \sigma(r_1,r_2) \,
  \max\left\{ 
  \sqrt{\langle |a(r_1,\cdot)|^2},\sqrt{\langle |a(r_2,\cdot)|^2} 
  \right\}
  \,.
$$
By virtue of the smoothness of~$a$,
this inequality ensures that~$\lambda_a$ is continuous
(even Lipschitz).

Let us now fix any positive~$r$
and show that~\eqref{antiflux}
implies that $\lambda_a(r)=0$. 
Using the similarity relation~\eqref{similar},
one has the estimate
\begin{equation}\label{crude}
  \lambda_a(r) \leq 
  \kappa_a(r)^2 \
  \inf_{\stackrel[\varphi\not=0]{}{
  \varphi \in \Dom P_a}}
  \frac{\displaystyle
  \int_{-\pi}^\pi 
  |(P_{\langle a(r,\cdot) \rangle}
  \Omega_{a(r,\cdot)}\varphi)(\theta)|^2 \, \der\theta}
  {\displaystyle
  \int_{-\pi}^\pi 
  |(\Omega_{a(r,\cdot)}\varphi)(\theta)|^2 \, \der\theta}
  \,,
\end{equation}
where the condition number
$\kappa_a(r) := \|\Omega_{a(r,\cdot)}\| \|{\Omega_{a(r,\cdot)}}^{\!\!-1}\|$
satisfies $1 \leq \kappa_a(r) < \infty$.
Here $\|\cdot\|$ stands for the operator norm in $\sii((-\pi,\pi))$,
which can be expressed through a supremum norm in our case
for the operator in question is a multiplication operator.
In view of the second condition of~\eqref{antiflux},
the operator $P_{\langle a(r,\cdot) \rangle}$ is self-adjoint,
and therefore the Rayleigh quotient in~\eqref{crude} 
equals the square of its smallest eigenvalue in absolute value.
Recalling~\eqref{spectrum}, we thus obtain
\begin{equation}\label{crude1}
  \lambda_a(r) \leq 
  \kappa_a(r)^2 \, 
  \dist(\langle \Re a(r,\cdot) \rangle,\Int)^2
  \,.
\end{equation}
Using now in addition the first condition of~\eqref{antiflux},
we conclude that $\lambda_a(r)=0$.

Conversely, assuming that $\lambda_a(r)=0$ for a given positive~$r$,
let us show that necessarily~\eqref{antiflux} holds.
Indeed, if $\lambda_a(r)=0$, it follows from~\eqref{lambda}
that $P_{a(r,\cdot)}\varphi_r=0$, 
whence $0 \in \sigma(P_{a(r,\cdot)})$.
Recalling~\eqref{spectrum}, 
the latter is possible only if~\eqref{antiflux} holds.
\end{proof}
\begin{Remark}
Proceeding as in~\eqref{crude} and~\eqref{crude1},
and assuming the second condition of~\eqref{antiflux},
one also has the lower bound
$$
  \lambda_a(r) \geq 
  \kappa_a(r)^{-2} \, 
  \dist(\langle \Re a(r,\cdot) \rangle,\Int)^2
  \,.
$$
\end{Remark}

\subsection{From local to global Hardy inequalities}
Now we explain how to pass from 
the local Hardy inequality~\eqref{local} to a \emph{global} one
(\ie~with a Hardy weight positive everywhere in~$\Real^2$). 
Notice that the following theorem does not require 
compactly supported magnetic fields
and the weaker hypothesis~\eqref{flux.bis} is assumed.

\begin{Theorem}\label{Thm.Hardy.log}
Let a smooth complex field $B:\Real^2\to\Com$ 
satisfy the condition 
\begin{equation}\label{flux.log}
  \exists r \in (0,\infty), \qquad
  \frac{1}{2\pi}\int_{D_r} \Re B(x) \, \der x \not\in \Int
  \qquad \mbox{or} \qquad
  \frac{1}{2\pi}\int_{D_r} \Im B(x) \, \der x \not=0
  \,,
\end{equation}
where $D_r := \{x \in \Real^2 : |x| < r\}$ is the disk of radius~$r$.
Then there exists a positive constant~$\tilde{c}$ 
depending on~$B$
such that the following inequality holds
\begin{equation}\label{Hardy.log}
  \forall \psi \in C_0^\infty(\Real^2) \,, \qquad
  \int_{\Real^2} |\nabla_{\!A}\psi(x)|^2 \, \der x
  \geq \tilde{c} \int_{\Real^2} \frac{|\psi(x)|^2}{1+|x|^2\log^2(|x|)} \, \der x
  \,,
\end{equation}
where the complex vector potential~$A$ is given by~\eqref{gauge}.   
\end{Theorem}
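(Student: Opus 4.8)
The plan is to start from the local Hardy inequality~\eqref{local}, which gives
$$
  \int_{\Real^2} |\nabla_{\!A}\psi(x)|^2 \, \der x
  \geq \int_{\Real^2} \frac{\lambda_a(|x|)}{|x|^2} \, |\psi(x)|^2 \, \der x
  \,,
$$
and to show that the right-hand side controls a multiple of $\int_{\Real^2} |\psi|^2 (1+|x|^2\log^2|x|)^{-1}\,\der x$. The obstacle is that, unlike in the compactly supported case, we cannot assume $\lambda_a$ is eventually constant; under hypothesis~\eqref{flux.bis} we only know from Lemma~\ref{Lem.lambda} that $\lambda_a$ is continuous and that $\lambda_a(r_0)>0$ for some particular radius $r_0$. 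So the first step is to fix such an $r_0$ (coming from~\eqref{flux.log} via~\eqref{passing}) and, by continuity, find $\delta>0$ and $c_0>0$ with $\lambda_a(r)\geq c_0$ for all $r$ in a compact interval $[r_0,r_0(1+\delta)]$, say. This yields the one-dimensional inequality
$$
  \int_0^\infty |\partial_r\phi(r,\theta)|^2 \, r\,\der r
  + \int_{r_0}^{r_0(1+\delta)} \frac{c_0}{r^2}\,|\phi(r,\theta)|^2\, r\,\der r
  \geq c_1 \int_0^\infty \frac{|\phi(r,\theta)|^2}{1+r^2\log^2 r}\, r\,\der r
$$
for each fixed $\theta$, with $c_1>0$ independent of $\theta$ and $\phi$; integrating in $\theta$ and returning to Cartesian coordinates then gives~\eqref{Hardy.log}.

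The heart of the matter is thus the radial one-dimensional Hardy-type inequality: if $f\in C_0^\infty((0,\infty))$ and we set $g(r):=r^{1/2}f(r)$ (or work directly with the measure $r\,\der r$), then
$$
  \int_0^\infty |f'(r)|^2\, r\,\der r + \int_{r_0}^{r_0(1+\delta)} |f(r)|^2\, \frac{\der r}{r}
  \geq c_1 \int_0^\infty \frac{|f(r)|^2}{1+r^2\log^2 r}\, r\,\der r
  \,.
$$
This is a known-type statement (it is exactly the mechanism behind the $\log^2$ weight in~\cite{CK}, and similar in spirit to the classical one-dimensional Hardy inequality with a logarithmic weight), so I expect to either cite~\cite{CK} for it directly or reprove it quickly. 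The standard argument: split the integral on the right into the regions $r\leq r_0$, $r_0\leq r\leq r_0(1+\delta)$, and $r\geq r_0(1+\delta)$. On the bounded middle region the weight $(1+r^2\log^2 r)^{-1}$ is bounded, so $\int_{\text{mid}}|f|^2 r\,\der r/(1+r^2\log^2 r)$ is dominated by $\int_{r_0}^{r_0(1+\delta)}|f|^2\,\der r/r$ up to a constant. On the tail $r\geq r_0(1+\delta)$ one uses the elementary weighted Hardy inequality: for $h(r):=f(r)-$ (value controlled by the middle region) one has $\int_{r_0(1+\delta)}^\infty |f|^2/(r\log^2 r)\,\der r \lesssim \int_{r_0(1+\delta)}^\infty |f'|^2 r\,\der r + (\text{boundary term at } r_0(1+\delta))$, the boundary term again being absorbed by the middle region. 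The region $r\leq r_0$ is handled symmetrically (there $r^2\log^2 r$ is bounded, so $(1+r^2\log^2 r)^{-1}$ is bounded, and one uses $\int_0^{r_0}|f|^2 r\,\der r \lesssim r_0^2\int_0^{r_0}|f'|^2 r\,\der r + r_0^2 |f(r_0)|^2$ type bounds, or simply that $\phi$ is smooth up to $r=0$ by construction). Collecting the three regions with a single small constant $c_1$ proportional to $\min\{c_0, \text{(universal constants)}\}$ gives the claim.

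The main technical point to be careful about is that $\phi(r,\theta)=\psi(r\cos\theta,r\sin\theta)$ is not compactly supported away from $r=0$: it is smooth across the origin, and $\phi(0,\theta)$ may be nonzero. This is harmless — near $r=0$ the weight $\lambda_a(r)/r^2$ is bounded (as remarked after~\eqref{local}, $\lambda_a(r)=\mathcal{O}(r^2)$), and $(1+r^2\log^2 r)^{-1}$ is bounded there too, so the contributions near the origin on both sides are directly comparable via the trivial bound $\int_0^{r_0}|\phi|^2 r\,\der r<\infty$ against itself; the genuine Hardy mechanism is only needed at infinity. Alternatively, one can note that the result for compactly supported $B$ (Theorem~\ref{Thm.Hardy}) already handles the case where~\eqref{flux.intro.bis} holds, and the present theorem strictly improves on it only by weakening the flux hypothesis and the support hypothesis, so the structure of the argument parallels~\cite{CK} closely. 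I would state the radial lemma explicitly, prove it by the three-region splitting above, and then assemble~\eqref{Hardy.log} in one line by integrating over $\theta$ and using Fubini, exactly as in the passage from~\eqref{polar} to~\eqref{local}.
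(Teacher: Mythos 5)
Your proposal is correct and follows essentially the same route as the paper's proof: both isolate a compact interval of radii on which $\lambda_a$ is bounded below (via Lemma~\ref{Lem.lambda} and continuity), retain the radial kinetic term, and upgrade the resulting local inequality to the global logarithmic weight by one-dimensional Hardy inequalities centred in that interval, absorbing the annulus contribution into the local bound. The only differences are cosmetic: the paper implements your radial lemma via a smooth cut-off vanishing near the midpoint $r_0$ plus an $\eps$-interpolation between the two lower bounds, whereas you package the same mechanism as a single three-region estimate with boundary terms.
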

\begin{proof}
The claim follows by mimicking the proof of \cite[Thm.~3.1]{CK}.
Let $\psi \in C_0^\infty(\Real^2)$ and recall that
we denote by~$\phi$ its counterpart in polar coordinates. 

From Lemma~\ref{Lem.lambda}
it follows that there exists a positive constant~$\nu$
(depending on the behaviour of the function~$\lambda_a$,
which is in turn determined by the behaviour of~$B$)
and a bounded open interval $I \subset (0,\infty)$ such that
$\lambda_a(r) / r^2 \geq \nu > 0$ for all $r \in I$.
From~\eqref{local} we thus conclude
\begin{equation}\label{step1}
  \int_{\Real^2} |\nabla_{\!A}\psi(x)|^2 \, \der x
  \geq
  \nu \int_{\Real^2} \chi_I(x) \, |\psi(x)|^2 \, \der x
  \,,
\end{equation}
where~$\chi_I$ denotes the characteristic function
of the annulus $\{x \in \Real^2 : |x| \in I\}$.

To extend this local Hardy inequality to~$\Real^2$,
we employ the presence of the kinetic term that
we neglected in~\eqref{local}, namely
\begin{equation}\label{step1.5}
  \int_{\Real^2} |\nabla_{\!A}\psi(x)|^2 \, \der x
  \geq
  \int_0^\infty \int_{-\pi}^\pi 
  |\partial_r\phi(r,\theta)|^2 
  \, \der\theta \, r \, \der r 
  \,,
\end{equation}
together with the one-dimensional Hardy-type inequalities
(\cf~\cite[Lem.~3.1]{CK})
\begin{equation}\label{aux2}
  \forall f \in C_0^\infty(\Real\setminus\{r_0\}) 
  \,, \qquad
\begin{aligned}  
  \int_{0}^{r_0}
  |f'(r)|^2 \, r \, \der r
  &\geq \gamma
  \int_{0}^{r_0} |f(r)|^2 \, r \, \der r
  \,,
  \\
  \int_{r_0}^\infty
  |f'(r)|^2 \, r \, \der r
  &\geq \gamma
  \int_{r_0}^\infty \frac{|f(r)|^2}{r^2\log^2(r/r_0)} \, r \, \der r
  \,,
\end{aligned}
\end{equation}
valid for any positive~$r_0$
with some positive constant~$\gamma$
depending only on~$r_0$. 
Choosing~$r_0$ to be the middle point of the interval~$I$,
we introduce a cut-off function
$\xi \in C^\infty((0,\infty))$
such that $0 \leq \xi \leq 1$,
$\xi$~vanishes in a neighbourhood of~$r_0$
and $\xi=1$ outside the interval~$I$.
Writing 
$
  \phi(r,\theta) = \xi(r) \phi(r,\theta) + (1-\xi(r)) \phi(r,\theta)
$
and using~\eqref{aux2} with help of Fubini's theorem, 
we get
\begin{eqnarray}\label{trick}
\lefteqn{
  \int_{-\pi}^\pi \int_0^\infty  
  \frac{|\phi(r,\theta)|^2}{1+r^2\log^2(r/r_0)} 
  \, r \, \der r \, \der \theta
}
  \\
  &&\leq
  2 \int_{-\pi}^\pi \int_0^\infty 
  \frac{|\xi(r) \phi(r,\theta)|^2}{1+r^2\log^2(r/r_0)} 
  \, r \, \der r \, \der \theta
  +  2 \int_{-\pi}^\pi \int_0^\infty 
  |(1-\xi(r)) \phi(r,\theta)|^2 
  \, r \, \der r \, \der \theta
  \nonumber \\
  &&\leq
  \frac{2}{\gamma} \int_{-\pi}^\pi \int_0^\infty 
  |\partial_r[\xi(r) \phi(r,\theta)]|^2 
  \, r \, \der r \, \der \theta
  +  2 \int_{-\pi}^\pi \int_I 
  |\phi(r,\theta)|^2 
  \, r \, \der r \, \der \theta
  \nonumber \\
  &&\leq
  \frac{4}{\gamma} \int_{-\pi}^\pi \int_0^\infty 
  |\partial_r\phi(r,\theta)|^2 
  \, r \, \der r \, \der \theta
  + \frac{4}{\gamma} \int_{-\pi}^\pi \int_I 
  |\xi'(r)|^2 \, |\phi(r,\theta)|^2 
  \, r \, \der r \, \der \theta
  +  2 \int_{-\pi}^\pi \int_I 
  |\phi(r,\theta)|^2 
  \, r \, \der r \, \der \theta
  \nonumber \\
  &&\leq
  \frac{4}{\gamma} \int_{-\pi}^\pi \int_0^\infty 
  |\partial_r\phi(r,\theta)|^2 
  \, r \, \der r \, \der \theta
  + \left( \frac{4}{\gamma} \, \|\xi'\|_\infty^2 + 2 \right)
  \int_{-\pi}^\pi \int_I 
  |\phi(r,\theta)|^2 
  \, r \, \der r \, \der \theta
  \,.
  \nonumber
\end{eqnarray}
Coming back to the test function~$\psi$
and recalling~\eqref{step1.5},
we have therefore proved
\begin{equation}\label{step2}
  \int_{\Real^2} |\nabla_{\!A}\psi(x)|^2 \, \der x
  \geq \frac{\gamma}{4} \int_{\Real^2}
  \frac{|\psi(x)|^2}{1+|x|^2\log^2(|x|/r_0)} \, \der x
  - \left(\|\xi\|_\infty^2+\frac{\gamma}{2}\right)
  \int_{\Real^2}
  \chi_I(x) \, |\psi(x)|^2 \, \der x
  \,.
\end{equation}

Finally, interpolating between~\eqref{step1} and~\eqref{step2},
we get
$$
  \int_{\Real^2} |\nabla_{\!A}\psi(x)|^2 \, \der x
  \geq
  \left[
  (1-\eps) \nu - \eps \left(\|\xi\|_\infty^2+\frac{\gamma}{2}\right)
  \right]
  \int_{\Real^2} \chi_I(x) \, |\psi(x)|^2 \, \der x
  + \eps \, \frac{\gamma}{4} \int_{\Real^2}
  \frac{|\psi(x)|^2}{1+|x|^2\log^2(|x|/r_0)} \, \der x
$$
with any $\eps > 0$.
Choosing~$\eps$ in such a way that the square bracket vanishes,
we obtain~\eqref{Hardy.log} with
$$
  \tilde{c} \geq
  \frac{\displaystyle \frac{\gamma}{4} \, \nu}
  {\displaystyle \nu+\|\xi\|_\infty^2+\frac{\gamma}{2}} \
  \inf_{r \in (0,\infty)} \frac{1+r^2\log^2(r)}{1+r^2\log^2(r/r_0)}
  >0
  \,.
$$
The theorem is proved.
\end{proof}

Now we are in a position to prove Theorem~\ref{Thm.Hardy}.
It follows as a consequence of~\eqref{local} 
and Theorem~\ref{Thm.Hardy.log} 
under the stronger hypothesis~\eqref{flux.intro.bis}.

\begin{proof}[Proof of Theorem~\ref{Thm.Hardy}]
The claim follows by mimicking the proof of \cite[Thm.~3.2]{CK}.

In view of the hypothesis~\eqref{flux} 
and since~$B$ is supposed to have a compact support,
there exists a positive number~$R$ such that~\eqref{flux.log}
holds for all $r \geq R$.  
Indeed, it is enough to choose~$R$ so large 
that the support of~$B$ is contained in the disk~$D_R$. 
As a consequence of Theorem~\ref{Thm.Hardy.log}, 
we therefore obtain
\begin{equation}\label{int}
  \int_{\Real^2} |\nabla_{\!A}\psi(x)|^2 \, \der x
  \geq
  \tilde{c} \int_{D_R} \frac{|\psi(x)|^2}{1+|x|^2\log^2(|x|)} \, \der x
  \geq
  \tilde{c} \, a_R \int_{D_R} \frac{|\psi(x)|^2}{1+|x|^2} \, \der x
  \,,
\end{equation}
where~$\tilde{c}$ is positive and
$$
  a_R := \inf_{r\in(0,R)} \frac{1+r^2}{1+r^2\log^2(r)}
$$
is also a positive constant.

At the same time, 
recalling that the function $r \mapsto \lambda_a(r)$ 
is constant on $[R,\infty)$, 
the local Hardy inequality~\eqref{local} yields
\begin{equation}\label{ext}
  \int_{\Real^2} |\nabla_{\!A}\psi(x)|^2 \, \der x
  \geq
  \lambda_a(R)
  \int_{\Real^d \setminus D_R} \frac{|\psi(x)|^2}{|x|^2} \, \der x
  \geq
  \lambda_a(R)
  \int_{\Real^d \setminus D_R} \frac{|\psi(x)|^2}{1+|x|^2} \, \der x
  \,,
\end{equation}
where $\lambda_a(R)$ is positive due to Lemma~\ref{Lem.lambda}.

Combining the two inequalities~\eqref{int} and~\eqref{ext}, 
we get~\eqref{Hardy} with
$
  c \geq \min\{\tilde{c} \, a_R,\lambda_a(R)\}
  > 0
$.
\end{proof}
\begin{Remark}\label{Rem.optimal}
The Hardy inequality of Theorem~\ref{Thm.Hardy} is optimal
in the sense that if the hypothesis~\eqref{flux} does not hold,
then~\eqref{Hardy} cannot hold with a positive constant~$c$,
namely
$$
  c_* :=
  \inf_{\stackrel[\psi\not=0]{}{
  \psi \in C_0^\infty(\Real^2)}}
  \frac{\displaystyle
  \int_{\Real^2} |\nabla_{\!A}\psi(x)|^2 \, \der x}
  {\displaystyle
  \int_{\Real^2} \frac{|\psi(x)|^2}{1+|x|^2} \, \der x} 
  = 0
  \,.
$$ 
To see it, let us pass to the polar coordinates as above.
We set $\phi(r,\theta) := f(r) \varphi_R(\theta)$,
where~$R$ is such that $D_R \supset \supp B$, 
$f$~is such that $\supp f \subset (R,\infty)$
and~$\varphi_R$ is the eigenfunction of the operator
${P_{a(R,\cdot)}}^{\!\!*} \, P_{a(R,\cdot)}$
corresponding to the eigenvalue $\lambda_a(R)$, \cf~\eqref{lambda}.
If~\eqref{flux} is violated, 
then $\lambda_a(R)=0$ due to Lemma~\ref{Lem.lambda}.
Consequently,
$$
  c_* \leq
  \inf_{\stackrel[f\not=0]{}{
  f \in C_0^\infty((R,\infty))}}
  \frac{\displaystyle
  \int_0^\infty |f'(r)|^2 \, r \, \der r}
  {\displaystyle
  \int_0^\infty \frac{|f(r)|^2}{1+|r|^2} \, r \, \der r} 
  \,.
$$ 
It is well known that this infimum equals zero.
An explicit minimising sequence 
is for example given by a mollification of 
$$
  f_n(r) := \frac{1}{\log n}
  \begin{cases}
    \log(r/n)
    & \mbox{if} \quad r \in [n,n^2] \,,
    \\
    \log(n^3/r) 
    & \mbox{if} \quad r \in [n^2,n^3] \,,
    \\
    0
    & \mbox{otherwise} \,. 
  \end{cases}
$$  
with $n > R$.
\end{Remark}

\subsection{The Aharonov-Bohm field}
Finally, we establish Theorem~\ref{Thm.AB}
dealing with the vector potential~\eqref{AB}.

\begin{proof}[Proof of Theorem~\ref{Thm.AB}]
Let $\psi \in C_0^\infty(\Real^2\setminus\{0\})$.
Passing to the polar coordinates as in~\eqref{polar}
and proceeding as in~\eqref{local},
we have 
\begin{equation*}
\begin{aligned}
  \int_{\Real^2} |\nabla_{\!A_\infty}\psi(x)|^2 \, \der x
  &= \int_0^\infty \int_{-\pi}^\pi 
  \left(
  |\partial_r\phi(r,\theta)|^2 
  + \frac{|-i\partial_\theta\phi(r,\theta)-\alpha\,\phi(r,\theta)|^2}{r^2}
  \right)
  \der\theta \, r \, \der r 
  \\
  &\geq 
  \int_0^\infty \int_{-\pi}^\pi 
  \frac{\lambda_\alpha}{r^2} \, |\phi(r,\theta)|^2 
  \, \der\theta \, r \, \der r 
  \\
  &= 
  \lambda_\alpha
  \int_{\Real^2} \frac{|\psi(x)|^2}{|x|^2} \, \der x
  \,.
\end{aligned}
\end{equation*}
The only (but significant) difference with respect 
to smooth non-trivial fields is that~$\alpha$ is a constant
(so in particular independent of the radial coordinate~$r$),
and therefore $\lambda_\alpha$ is constant.
By virtue of Lemma~\ref{Lem.lambda},
$\lambda_\alpha$ is positive if, and only if, 
the flux condition \eqref{flux.AB} holds. 
This concludes the proof of Theorem~\ref{Thm.AB}
with $c_\infty \geq \lambda_\alpha$.
\end{proof}

\subsection{Robust Hardy inequalities}\label{Sec.robust}
In this subsection we go beyond the special
choice~\eqref{gauge} of the vector potential.
Our aim is to establish a Hardy inequality 
valid for \emph{any} smooth choice of the vector potential~$A$
whenever the complex-valued magnetic field~$B$ is non-trivial
(and the imaginary part satisfies some extra condition).
We are inspired by the gauge-free proof of~\cite[Thm.~1.1]{CK},
but important modifications are needed  
because the diamagnetic inequality
is not available for complex-valued magnetic fields. 

Given a smooth complex field $B:\Real^2\to\Com$,
let $A:\Real^2 \to \Com^2$ be any smooth vector potential
satisfying $\rot A = B$.
For every positive number~$R$,
we introduce the non-negative quantity
\begin{equation}\label{mu}
  \mu_A(R) :=
  \inf_{\stackrel[\psi\not=0]{}{
  \psi \in C^\infty(\overline{D_R})}}
  \frac{\displaystyle
  \int_{D_R} |\nabla_{\!A}\psi(x)|^2 \, \der x}
  {\displaystyle
  \int_{D_R} |\psi(x)|^2 \, \der x} 
  \,.
\end{equation}
The following lemma ensures that the function $R \mapsto \mu_A(R)$ 
is identically equal to zero if, and only if, 
the magnetic field~$B$ is trivial in~$\Real^2$.

\begin{Lemma}\label{Lem.mu}
Given any $R \in (0,\infty)$, $\mu_A(R) = 0$ if, and only if, 
$B=0$ on $D_R$.
\end{Lemma}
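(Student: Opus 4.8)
The plan is to prove both implications separately, the "if" direction being immediate and the "only if" direction being the substantive one.

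First, the trivial direction: if $B=0$ on $D_R$, then the vector potential $A$ restricted to $D_R$ satisfies $\rot A = 0$ on the simply connected domain $D_R$, so there exists a smooth (complex-valued) function $\omega$ on $\overline{D_R}$ with $\nabla\omega = A$. Then the gauge transformation $\psi \mapsto e^{i\omega}\psi$ shows $\nabla_{\!A}(e^{i\omega}\psi) = e^{i\omega}\nabla\psi$, but since $\omega$ is complex this does not preserve the $L^2$-norm. Instead I would simply test the Rayleigh quotient in~\eqref{mu} with $\psi = e^{i\omega}$ itself: the numerator vanishes while the denominator is strictly positive, so $\mu_A(R) = 0$. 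This is clean and uses only that $D_R$ is simply connected.

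For the converse, suppose $\mu_A(R) = 0$. Since $C^\infty(\overline{D_R})$ embeds compactly into $L^2(D_R)$ and the form $\psi \mapsto \int_{D_R}|\nabla_{\!A}\psi|^2$ is a bounded-below closed quadratic form with compact resolvent (it is the Neumann realization of ${(\nabla_{\!A})}^*\nabla_{\!A}$, which is genuinely self-adjoint even for complex $A$, as emphasized after Theorem~\ref{Thm.Hardy}), the infimum in~\eqref{mu} is attained at some $\psi_0 \not\equiv 0$ with $\nabla_{\!A}\psi_0 = 0$ a.e.\ on $D_R$, i.e.\ $\nabla\psi_0 = iA\psi_0$. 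First I would argue $\psi_0$ is nowhere zero on $D_R$: writing the ODE along rays from a point where $\psi_0 \neq 0$, the solution of $\nabla\psi_0 = iA\psi_0$ is locally $\psi_0(x) = \psi_0(x_0)\exp(i\int_{x_0}^x A\cdot \der\ell)$ along any path, which never vanishes; by connectedness of $D_R$ either $\psi_0 \equiv 0$ or $\psi_0$ is zero-free, and the former is excluded. Then $\omega := -i\log\psi_0$ is a well-defined smooth function on $D_R$ (again using simple connectedness to pick a single-valued branch) with $\nabla\omega = A$ on $D_R$, whence $B = \rot A = \partial_1\partial_2\omega - \partial_2\partial_1\omega = 0$ on $D_R$.

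The main obstacle is making rigorous the assertion that the infimum is attained and that the minimizer lies in the form domain with $\nabla_{\!A}\psi_0 = 0$ — one must be slightly careful because $C^\infty(\overline{D_R})$ is not complete, so strictly speaking the infimum in~\eqref{mu} is over a dense subset of the form domain $W^{1,2}(D_R)$ of the Neumann operator; since the form is nonnegative, $\mu_A(R)$ equals the bottom of the spectrum of that self-adjoint operator, and $\mu_A(R)=0$ forces $0$ to be an eigenvalue with eigenfunction $\psi_0 \in W^{1,2}(D_R)$ satisfying $\nabla_{\!A}\psi_0 = 0$; elliptic regularity (the equation $\nabla\psi_0 = iA\psi_0$ with smooth $A$) then upgrades $\psi_0$ to $C^\infty$, legitimizing the pointwise argument above. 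I expect this functional-analytic bookkeeping, rather than the geometry, to be where care is needed; the zero-free and single-valued-logarithm steps are then routine given simple connectedness of the disk.
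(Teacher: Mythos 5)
Your proof is correct, and its skeleton (variational characterisation of $\mu_A(R)$ as the lowest Neumann eigenvalue, attainment of the infimum, smoothness of the minimiser, nowhere-vanishing, and the trivial converse via $\psi=e^{i\omega}$) coincides with the paper's. The genuine difference lies in how the nowhere-vanishing of the zero mode is established, which is the crux of the lemma. The paper takes the \emph{second-order} route: it cross-differentiates the system $\nabla_{\!A}\psi_R=0$ to get the pointwise identity $B\,\psi_R=0$, derives the Schr\"odinger equation $-\Delta\psi_R+(i\divergence A-A^2)\psi_R=0$, observes that a zero of $\psi_R$ is of infinite order, and then invokes the strong unique continuation property to conclude that $\psi_R$ is zero-free. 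You instead exploit the \emph{first-order} system directly: integrating $\nabla\psi_0=iA\psi_0$ along a smooth path from a point where $\psi_0\neq0$ gives the explicit representation $\psi_0(\gamma(t))=\psi_0(\gamma(0))\exp\bigl(i\int_0^t A(\gamma(s))\cdot\gamma'(s)\,\der s\bigr)$, whose exponential factor never vanishes, so path-connectedness of $D_R$ forces $\psi_0$ to be zero-free. This is more elementary — it avoids unique continuation entirely and uses only uniqueness for linear scalar ODEs — at the price of being specific to the rank-one situation $\nabla_{\!A}\psi_0=0$; the paper's argument is the one that generalises when only a second-order equation is available. Your final step (taking a single-valued branch of $\log\psi_0$ on the simply connected disk and writing $A=\nabla\omega$, whence $B=\rot\nabla\omega=0$) is an equivalent repackaging of the paper's division of $B\,\psi_R=0$ by the zero-free $\psi_R$. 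The functional-analytic bookkeeping you flag (replacing $C^\infty(\overline{D_R})$ by the form domain $W^{1,2}(D_R)$, discreteness of the spectrum from the compact embedding, elliptic regularity of the eigenfunction) is exactly what the paper does, so no gap remains.
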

\begin{proof} 
The number~$\mu_A(R)$ as defined in~\eqref{mu} is the spectral threshold of 
the (self-adjoint) magnetic Schr\"odinger operator
$({\nabla_{\!A}})^{*} \, \nabla_{\!A}$ in $\sii(D_R)$,
subject to magnetic Neumann boundary conditions.
Since~$A$ is bounded on the disk~$D_R$,
the form domain coincides with the Sobolev space $W^{1,2}(D_R)$,
which is compactly embedded in $\sii(D_R)$.
Hence the spectrum of the operator is purely discrete,
the infimum in~\eqref{mu} is achieved 
and~$\mu_A(R)$ is the lowest eigenvalue of the operator. 
Let us denote by~$\psi_R$ a corresponding eigenfunction,
which is smooth due to elliptic regularity theory.

If $\mu_A(R)=0$, then $\nabla_{\!A}\psi_R=0$ in~$D_R$,
which is equivalent to the system of two equations
\begin{equation}\label{system}
\begin{aligned}
  -i \partial_1 \psi_R - A_1 \psi_R &= 0 \,,
  \\
  -i \partial_2 \psi_R - A_2 \psi_R &= 0 \,,
\end{aligned}
\end{equation}
where we write $A=(A_1,A_2)$.
Differentiating the first equation with respect to the second variable,
differentiating the second equation with respect to the first variable
and subtracting the two results, we arrive at the pointwise identity
\begin{equation}\label{pre-identity}
  B \, \psi_R = 0
  \quad \mbox{on} \quad
  D_R
  \,.
\end{equation}
At the same time,  
differentiating the first equation of~\eqref{system} 
with respect to the first variable,
differentiating the second equation with respect to the second variable
and adding the two results, we obtain that~$\psi_R$ satisfies 
the stationary Schr\"odinger equation    
$$
  -\Delta \psi_R + (i \divergence A - A^2) \psi_R = 0
  \quad \mbox{in} \quad
  D_R
  \,,
$$
where $A^2 := A_1^2+A_2^2$.
Taking into account that~\eqref{system} also implies 
that if $\psi_R$ vanishes at a point of~$D_R$
then it vanishes there of infinite order,
the unique continuation property ensures that~$\psi_R$
is nowhere zero in~$D_R$.
Hence, \eqref{pre-identity} implies that $B=0$ in~$D_R$. 

To prove the opposite implications, 
let us assume that $B=0$ in~$D_R$ (\ie~$A$ is closed). 
Then there exists a smooth function $F: \Real^2 \to \Com$
such that $A=\nabla F$ (\ie~$A$ is exact).
Choosing $\psi = e^{iF}$ as the test function 
in the variational definition~\eqref{mu},
we obtain $\mu_A(R)=0$.
\end{proof}

The following theorem is a variant of Theorem~\ref{Thm.Hardy.log} 
obtained under alternative hypotheses.
It is also a complex extension of the robust 
magnetic Hardy inequality~\cite[Thm.~1.1]{CK}.

\begin{Theorem}\label{Thm.Hardy.robust}
Let a smooth complex field $B:\Real^2\to\Com$ 
be not identically equal to zero. 
Then, for any smooth vector potential $A: \Real^2 \to \Com^2$
satisfying $B = \rot A$ and the hypothesis
\begin{equation}\label{Ass}
  \lim_{|x| \to \infty} 
  \big( |\Im A(x)| \, |x| \, \log(|x|) \big)
  = 0
  \,,
\end{equation}
there exists a positive constant~$\hat{c}$ 
depending on~$A$
such that the following inequality holds
\begin{equation}\label{Hardy.robust}
  \forall \psi \in C_0^\infty(\Real^2) \,, \qquad
  \int_{\Real^2} |\nabla_{\!A}\psi(x)|^2 \, \der x
  \geq \hat{c} \int_{\Real^2} \frac{|\psi(x)|^2}{1+|x|^2\log^2(|x|)} \, \der x
  \,.
\end{equation}
\end{Theorem}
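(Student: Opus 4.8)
The plan is to follow the same two–scale strategy as in the proof of Theorem~\ref{Thm.Hardy.log}: first extract a positive local Hardy inequality on some annulus from the non-triviality of~$B$ via Lemma~\ref{Lem.mu}, and then bootstrap it to the weighted global inequality on~$\Real^2$ using the purely radial one-dimensional Hardy inequalities~\eqref{aux2}. The novelty, and the place where the hypothesis~\eqref{Ass} enters, is that without the transverse gauge~\eqref{gauge} we can no longer reduce the angular part to the momentum~$P_a$ on the circle; instead we must work directly with the two-dimensional form $\int_{\Real^2}|\nabla_{\!A}\psi|^2$ and control the ``imaginary part of~$A$'' as a perturbation.

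First I would fix $R_0>0$ with $B\not\equiv 0$ on $D_{R_0}$ (possible since $B\not\equiv 0$). By Lemma~\ref{Lem.mu}, $\mu_A(R_0)>0$. Since $C_0^\infty(\Real^2)$ restricted to $D_{R_0}$ sits in $C^\infty(\overline{D_{R_0}})$, the variational characterisation~\eqref{mu} yields the local inequality
\begin{equation*}
  \int_{\Real^2} |\nabla_{\!A}\psi(x)|^2 \, \der x
  \geq \mu_A(R_0) \int_{D_{R_0}} |\psi(x)|^2 \, \der x
  \,, \qquad \psi \in C_0^\infty(\Real^2) \,.
\end{equation*}
To turn this into a global statement I would pass to polar coordinates, write $A = \Re A + i\,\Im A$, and split the gradient term. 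The real part of~$A$ is harmless: $|\nabla_{\!\Re A}\psi|\ge|\,\partial_r|\phi|\,|$ by the diamagnetic inequality \emph{for the real vector potential} (which \emph{is} available once we have separated off $\Im A$), so the radial kinetic energy $\int_0^\infty\int_{-\pi}^\pi |\partial_r|\phi||^2 r\,\der r\,\der\theta$ is bounded by $\int_{\Real^2}|\nabla_{\!\Re A}\psi|^2$. To relate $\nabla_{\!A}$ to $\nabla_{\!\Re A}$ I would use $|\nabla_{\!A}\psi|^2 = |\nabla_{\!\Re A}\psi|^2 + |\Im A|^2|\psi|^2 - 2\,\Im\big(\overline{\nabla_{\!\Re A}\psi}\cdot \Im A\,\psi\big)$ and estimate the cross term by Cauchy–Schwarz with a weight, paying the price $\int |\Im A|^2 |x|^2 \log^2(|x|)\,\frac{|\psi|^2}{|x|^2\log^2(|x|)}\,\der x$; hypothesis~\eqref{Ass} guarantees the coefficient $|\Im A(x)|^2|x|^2\log^2|x|$ is bounded and tends to~$0$, so it can be absorbed by a small multiple of the weighted norm on the right of~\eqref{Hardy.robust} away from a large disk, and is harmless on a bounded set.

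With the radial kinetic energy in hand, I would mimic~\eqref{step1.5}--\eqref{trick}: choose $r_0$ in the annulus $\{|x|<R_0\}$, a cut-off $\xi$ vanishing near~$r_0$ and equal to~$1$ outside a small interval $I\subset(0,R_0)$, split $\phi=\xi\phi+(1-\xi)\phi$, apply the one-dimensional Hardy inequalities~\eqref{aux2} to $\xi\phi$, and control $(1-\xi)\phi$ by the local term $\int_I|\phi|^2 r\,\der r$, which in turn is dominated by $\int_{D_{R_0}}|\psi|^2$ and hence by $\mu_A(R_0)^{-1}\int_{\Real^2}|\nabla_{\!A}\psi|^2$. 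Interpolating between the resulting inequality and the local bound exactly as after~\eqref{step2}, and comparing the weight $1+|x|^2\log^2(|x|/r_0)$ with $1+|x|^2\log^2|x|$ via a bounded ratio, yields~\eqref{Hardy.robust} with an explicit $\hat{c}>0$. The main obstacle is the bookkeeping in the second paragraph: making precise that the cross term $\Im\big(\overline{\nabla_{\!\Re A}\psi}\cdot\Im A\,\psi\big)$ can be split into a piece absorbed into the (already-used) radial kinetic energy and a piece absorbed into an arbitrarily small fraction of the target weighted norm outside a large disk — this is where the specific decay rate in~\eqref{Ass} ($|\Im A|\,|x|\log|x|\to 0$, rather than merely bounded) is essential, and one must track constants carefully to ensure the final interpolation still leaves a strictly positive constant.
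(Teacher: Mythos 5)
Your proposal is correct and follows essentially the same route as the paper: local positivity on a disk from Lemma~\ref{Lem.mu}, an exterior weighted Hardy estimate in which $\Im A$ is treated as a perturbation made small by~\eqref{Ass}, and then the cutoff-plus-interpolation scheme of Theorem~\ref{Thm.Hardy.log}. The only divergence is in the exterior step: where you separate $\nabla_{\!A}=\nabla_{\!\Re A}+\Im A$, apply the real diamagnetic inequality together with the one-dimensional Hardy inequality~\eqref{aux2}, and absorb the cross term by a weighted Cauchy--Schwarz, the paper keeps $\Im A$ attached to $|\psi|$ via $|\nabla_{\!A}\psi|\geq|\nabla_{\!\Im A}|\psi||$ and completes the square against the vector field $\beta\,x/(|x|^2\log(|x|/r_0))$, which produces the explicit constant $\gamma_R=\big((1-2k_R)/2\big)^2$ in one stroke and thereby settles exactly the bookkeeping you flag as the main obstacle --- both variants ultimately need only that $k_R=\sup_{|x|\geq r_0}\big(|\Im A(x)|\,|x|\log(|x|/r_0)\big)$ falls below a fixed threshold for large $R$, which~\eqref{Ass} guarantees. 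One cosmetic slip: since $-i(i\,\Im A)=+\Im A$, the cross term is $+2\Re\big(\overline{\nabla_{\!\Re A}\psi}\cdot\Im A\,\psi\big)$ rather than $-2\Im(\cdots)$, but this does not affect your estimate, which only uses its absolute value.
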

\begin{proof} 
We are inspired by the gauge-free proof of~\cite[Thm.~1.1]{CK}. 
 
Let $\psi \in C_0^\infty(\Real^2)$.
The definition~\eqref{mu} yields
\begin{equation}\label{Hardy.free}
  \int_{\Real^2} |\nabla_{\!A}\psi(x)|^2 \, \der x
  \geq
  \mu_A(R)
  \int_{D_R} |\psi(x)|^2 \, \der x
  \,,
\end{equation}
where $\mu_A(R)$ is positive for all sufficiently large~$R$
whenever~$B$ is non-trivial.
This is a variant of the local Hardy inequality~\eqref{step1} 
with $I:=(0,R)$.
To conclude with the global Hardy inequality~\eqref{Hardy.robust} 
by mimicking the rest of the proof of Theorem~\ref{Thm.Hardy.log},
however, we would also need~\eqref{step1.5}.
For purely real-valued vector potentials considered in~\cite{CK},
a variant of~\eqref{step1.5} 
(with the absolute value~$|\phi|$ instead of~$\phi$)
follows by the diamagnetic inequality \cite[Thm.~7.21]{LL}.
In our more general case, this inequality only yields
\begin{equation}\label{dia}
  \forall x \in \Real^2 \,, \qquad
  |\nabla_{\!A}\psi(x)| \geq |\nabla_{\!\Im A}|\psi(x)||
  \,,
\end{equation}
and we have to proceed differently.

Set $r_0 := R/2$, the middle point of the interval~$I$.
Suppose for a moment that $\psi \in C_0^\infty(\Real^2 \setminus \overline{D_{r_0}})$. 
Then, for every real constant~$\beta$, we have
\begin{eqnarray*}
  \lefteqn{
  0 \leq
  \int_{\Real^2} 
  \left|
  \nabla_{\! \Im A} |\psi(x)| 
  - \beta \, \frac{x}{|x|^2 \log(|x|/r_0)} \, |\psi(x)|  
  \right|^2 \der x 
  }
  \\
  &&= 
  \int_{\Real^2} 
  |\nabla_{\! \Im A} |\psi(x)| |^2 \, \der x
  + \beta^2 \! \int_{\Real^2} \frac{|\psi(x)|^2}{|x|^2 \log^2(|x|/r_0)}  \, \der x
  \\
  &&
  \qquad
  - \beta \! \int_{\Real^2} \frac{x\cdot \nabla |\psi(x)|^2}{|x|^2 \log(|x|/r_0)} \, \der x
  - 2\beta \! \int_{\Real^2} \frac{x\cdot \Im A(x)}{|x|^2 \log(|x|/r_0)} \, |\psi(x)|^2 
  \, \der x
  \\
  &&= 
  \int_{\Real^2} 
  |\nabla_{\! \Im A} |\psi(x)| |^2 \, \der x
  + (\beta^2 - \beta) \int_{\Real^2} \frac{|\psi(x)|^2}{|x|^2 \log^2(|x|/r_0)}  \, \der x
  - 2\beta \! \int_{\Real^2} \frac{x\cdot \Im A(x)}{|x|^2 \log(|x|/r_0)} \, |\psi(x)|^2 
  \, \der x
  \,.
\end{eqnarray*}
Here the second equality follows by an integration by parts.
Recalling~\eqref{dia} and denoting
$$
  k_R :=
  \sup_{|x| \geq r_0} 
  \big( |\Im A(x)| \, |x| \, \log(|x|/r_0) \big)
  \,,
$$
which tends to zero as $R \to \infty$ due to~\eqref{Ass},
we have thus obtained the lower bound
\begin{align}\label{exterior}
  \int_{\Real^2} 
  |\nabla_{\! A} \psi(x) |^2 \, \der x
  &\geq 
  ( -\beta^2+\beta-2 \, |\beta| \, k_R )
  \int_{\Real^2} \frac{|\psi(x)|^2}{|x|^2 \log^2(|x|/r_0)}  \, \der x
  \nonumber \\
  &= \gamma_R
  \int_{\Real^2} \frac{|\psi(x)|^2}{|x|^2 \log^2(|x|/r_0)}  \, \der x
  \qquad \mbox{with} \qquad
  \gamma_R := \left( \frac{1-2 \, k_R}{2} \right)^2
  \,.
\end{align}
Here the equality follows by the best choice $\beta:=(1-2k_R)/2$, 
assuming that~$R$ is so large that $k_R < 1/2$.

To employ the estimate~\eqref{exterior} 
for our test function $\psi \in C_0^\infty(\Real^2)$,
we use the cut-off~$\xi$
from the proof of Theorem~\ref{Thm.Hardy.log}.
Similarly as in~\eqref{trick}, we have
\begin{eqnarray*}
\lefteqn{
  \int_{\Real^2} 
  \frac{|\psi(x)|^2}{1+|x|^2\log^2(|x|/r_0)} 
  \, \der x
}
\\
  &&\leq
  2 \int_{\Real^2 \setminus D_{r_0}} 
  \frac{|\xi(|x|) \psi(x)|^2}{|x|^2\log^2(|x|/r_0)} 
  \, \der x
  + 2 \int_{D_{r_0}} 
  |\xi(|x|) \psi(x)|^2
  \, \der x
  +  2 \int_{\Real^2}
  |(1-\xi(|x|)) \psi(x)|^2 
  \, \der x
  \\
  &&\leq
  \frac{2}{\gamma_R}  
  \int_{\Real^2 \setminus D_{r_0}} 
  |\nabla_{\! A} (\xi(|x|)\psi(x)) |^2 \, \der x
  +  2 \int_{D_{r_0}} 
  |\psi(x)|^2 
  \, \der x
  +  2 \int_{D_R} 
  |\psi(x)|^2 
  \, \der x
  \\
  &&\leq
  \frac{4}{\gamma_R}  
  \int_{\Real^2 \setminus D_{r_0}}  
  |\xi(|x|)|^2 \, |\nabla_{\!A} \psi(x)|^2 
  \, \der x
  + \frac{4}{\gamma_R}  \int_{\Real^2 \setminus D_{r_0}}  
  |\xi'(|x|)|^2 \,  |\psi(x)|^2 
  \, \der x
  + 4 \int_{D_R} |\psi(x)|^2 \, \der x  
  \\
  &&\leq
  \frac{4}{\gamma_R}  \int_{\Real^2 \setminus D_{r_0}}   
  |\nabla_{\!A}\psi(x)|^2 
  \, \der x
  + \left( \frac{4}{\gamma_R} \, \|\xi'\|_\infty^2 + 4 \right)
  \int_{D_R}
  |\psi(x)|^2 
  \, \der x
  \,,
\end{eqnarray*}
where we have used~\eqref{exterior} in the second inequality. 
In analogy with~\eqref{step2},
we have therefore proved 
\begin{equation}\label{step2.robust}
  \int_{\Real^2} |\nabla_{\!A}\psi(x)|^2 \, \der x
  \geq \frac{\gamma_R}{4} \int_{\Real^2}
  \frac{|\psi(x)|^2}{1+|x|^2\log^2(|x|/r_0)} \, \der x
  - \left(\|\xi\|_\infty^2 + \gamma_R \right)
  \int_{D_R}
 |\psi(x)|^2 \, \der x
  \,.
\end{equation}
As in the end of the proof of Theorem~\ref{Thm.Hardy.log},
interpolating between~\eqref{Hardy.free} and~\eqref{step2.robust},
we get
$$
  \int_{\Real^2} |\nabla_{\!A}\psi(x)|^2 \, \der x
  \geq
  \left[
  (1-\eps) \mu_A(R) - \eps \left(\|\xi\|_\infty^2+ \gamma_R\right)
  \right]
  \int_{D_R} |\psi(x)|^2 \, \der x
  + \eps \, \frac{\gamma_R}{4} \int_{\Real^d}
  \frac{|\psi(x)|^2}{1+|x|^2\log^2(|x|/r_0)} \, \der x
$$
with any $\eps > 0$.
Choosing~$\eps$ in such a way that the square bracket vanishes,
we obtain~\eqref{Hardy.robust} with
$$
  \hat{c} \geq
  \frac{\displaystyle \frac{\gamma_R}{4} \, \mu_A(R)}
  {\displaystyle \mu_A(R)+\|\xi\|_\infty^2+ \gamma_R} \
  \inf_{r \in (0,\infty)} \frac{1+r^2\log^2(r)}{1+r^2\log^2(r/r_0)}
  >0
  \,.
$$
The theorem is proved.
\end{proof}
\begin{Remark}
The hypothesis~\eqref{Ass} is clearly a condition on 
the rate of decay of~$\Im A$ at infinity. 
Such kind of condition would not be acceptable 
for the real part of~$A$ (because of the gauge invariance),
but it seems to be a relevant assumption 
when the imaginary part of~$A$ is present.
On the other hand, it is just a \emph{sufficient} condition 
to make the proof of Theorem~\ref{Thm.Hardy.robust} work.
An alternative to~\eqref{exterior} is to integrate by parts 
and get the identity
$$
\begin{aligned}
  \int_{\Real^2} |\nabla_{\! \Im A}\psi(x)|^2 \, \der x
  &= \int_{\Real^2} |\nabla|\psi(x)||^2 \, \der x
  + \int_{\Real^2} |\Im A(x)|^2 \, |\psi(x)|^2 \, \der x
  + \int_{\Real^2} \Im A(x) \cdot \nabla|\psi(x)|^2 \, \der x
  \\
  &= \int_{\Real^2} |\nabla|\psi(x)||^2 \, \der x
  + \int_{\Real^2} 
  \left(
  |\Im A(x)|^2 - \divergence \Im A(x)
  \right) 
  \, |\psi(x)|^2 \, \der x
\end{aligned}
$$
valid for all $\psi \in C_0^\infty(\Real^2)$.
If $|\Im A|^2 - \divergence \Im A$ is non-negative,
then one immediately gets~\eqref{step1.5} 
with help of~\eqref{dia} and one can proceed
as in the proof of Theorem~\ref{Thm.Hardy.log}
(with~\eqref{Hardy.free} instead of~\eqref{step1}). 
In fact, it is enought to assume that 
$|\Im A|^2 - \divergence \Im A$ is non-negative
outside the disk~$D_R$ where~$R$ can be chosen arbitrarily large.
Yet another possibility is to replace the sign condition
by the asymptotic behaviour 
\begin{equation*}
  \lim_{|x| \to \infty} 
  \left[
  ( |\Im A|^2 -\divergence \Im A)_-(x)
  \, |x|^2 \, \log^2(|x|) 
  \right]
  = 0
  \,,
\end{equation*}
where~$f_-$ denotes the negative part of a function~$f$. 
\end{Remark}
%


\subsection*{Acknowledgment}
%
The author was partially supported 
by the GACR grant No.\ 18-08835S
and by FCT (Portugal)
through project PTDC/MAT-CAL/4334/2014.

%
\bibliography{bib}
\bibliographystyle{amsplain}

\end{document}